\newtheorem{theorem}{Theorem}
\newtheorem{corollary}[theorem]{Corollary}
\newtheorem{lemma}[theorem]{Lemma}
\newtheorem{fact}[theorem]{Fact}
\newtheorem{proposition}[theorem]{Proposition}
\newenvironment{proof}[1][Proof]{\noindent \textbf{#1.} }{\  \rule{0.5em}{0.5em}}
\begin{document}

\title{Envelope theorem and discontinuous optimisation:\\ the case of positioning choice problems}
\author{Jean-Gabriel Lauzier\footnote{We thank Simone Cerreia-Vioglio, Luigi Montrucchio, Massimo Marinacci, Nenad Kos, Massimiliano Amarante, Fabio Maccheroni, Ali Khan, Ruodu Wang, \'Etienne Breton, David Salib, Christoph Wolf, David Ruiz-Gomez, Daniele D'Arienzo and Andrea Pasqualini for their comments. We acknowledge financial support from Bocconi University.}\\
University of Waterloo}
\maketitle

\begin{abstract}
    This article examines differentiability properties of the value function of \textit{positioning choice problems}, a class of optimisation problems in finite-dimensional Euclidean spaces. We show that positioning choice problems' value function is always almost everywhere differentiable even when the objective function is discontinuous. To obtain this result we first show that the Dini superdifferential is always well-defined for the maxima of positioning choice problems. This last property allows to state first-order necessary conditions in terms of Dini supergradients. We then prove our main result, which is an ad-hoc envelope theorem for positioning choice problems. Lastly, after discussing necessity of some key assumptions, we conjecture that similar theorems might hold in other spaces as well.
\end{abstract}
\textbf{Keywords:} Envelope theorem, Optimisation, Discontinuous optimisation, Danskin's theorem, Rademacher's theorem, Lipschitz continuity,  Positioning choice problems

\break

\section*{Introduction}
An envelope theorem is a statement about the derivative of value functions. Envelope
theorems have foundational applications in several fields of mathematical analysis, notably
in the calculus of variations and optimal control. As such, they are also fundamental to
the microeconomic analysis of consumer and producer problems (\citet{mas1995microeconomic}). Envelope theorems and their generalizations, as they are currently formulated 
and used, rely on a fundamental hypothesis: the continuity of the objective 
function. However, this assumption is problematic in applications where 
discontinuities are meaningful. Notably, discontinuous functions are essential 
to modeling pivotal economic phenomena such as executive bonuses, tax brackets 
and indivisible capital investments such as power plants.\\

Optimisation problems with discontinuities are not differentiable and thus 
standard first-order necessary conditions cannot be used to find maxima at 
such points. Consequently, prior envelope theorems cannot be used to 
characterize the behaviour of their value function. In this article, to tackles this issue we define a class of optimisation 
problems in finite-dimensional Euclidean spaces for which the value function 
is almost everywhere differentiable even when the objective function is 
discontinuous. These optimisation problems are dubbed \textit{positioning 
choice problems} since they have a straightforward geometrical interpretation 
as a choice of position. To the best of our knowledge this is the first 
article to explicitly examine the value function of optimisation problems with 
discontinuous objective functions.\\

As mentioned, the maxima of discontinuous functions are not differentiable and 
so standard first-order necessary conditions theorems cannot be used. We 
propose a generalisation of first-order conditions using the Dini 
superdifferential. Dini supergradients are well-defined at the maxima of 
positioning choice problems and so is the Dini superdifferential. To the best 
of our knowledge this property is not shared by other notions of 
superdifferentials in the literature. We then prove an "ad-hoc" envelope theorem for positioning choice problems along 
the lines of Danskin's theorem \citep{danskintheory}. A deeper look at the 
theorem shows that it relies extensively on Euclidean spaces being a 
(Dedekind-)complete ordered field, an observation worth investigating.\\ 

We set to do so with examples aimed at breaking down our main result. We first 
present an optimisation problem for which the maxima are isolated points but 
where the value function is still almost everywhere differentiable. This 
highlights that first-order conditions are not essential to the 
characterization of the derivative of value functions. The second example 
shows how differentiability depends on the domain of the value function. 
Piecing those observations together we suggest that envelope theorems might be 
obtained on other spaces which are isomorphic to the reals. We conclude with 
an example on a Riemannian manifold which supporting this conjecture.\\

The presentation is as follows. We first give a brief overview of the 
literature. We further detail important concepts in the beginning of Section 
1. Section 1 also defines positioning choice problems. It contains two 
examples that help build the intuition behind our ad-hoc envelope theorem. 
Variations of these examples will be used while discussing our findings. Theorem 2
 , which provides a generalized notion of first-order 
 necessary conditions for a maximum in the context of positioning choice 
 problems, and Theorem 3
 , the ad-hoc envelope theorem, 
 are presented in Section 2 which serves the core of the article. Section 3 discuss the main results. The presentation is less formal as we use 
examples to highlight certain properties of positioning choice problems. We 
show how continuity and differentiability properties of value functions are 
tightly connected to the completeness of Euclidean spaces. We suggest that 
envelope theorems similar to Theorem 3
might be stated 
for any spaces which are isomorphic to the reals. Avenues for further research are identified in conclusion. Appendix A 
contains the definitions omitted in the text and a relevant statement of 
Rademacher's theorem, while Appendix B contains the omitted proofs.

\subsection*{Related literature}

The essential property of positioning choice problems is that the value
function is always a locally Lipschitz map between two finite-dimensional 
Euclidean spaces, even when the objective function of the optimisation problem 
is discontinuous. By Rademacher's theorem, this implies that the value 
function is almost everywhere differentiable (on open sets, with regard to the 
Lebesgue measure) and therefore relatively "well-behaved".\\

The literature on envelope theorems is well-established. It is largely focused 
on providing increasingly general conditions on objective functions as to 
characterize the derivative of value functions. Recent examples are 
\citet{morand2015nonsmooth} and \citet{Morand2018}. The authors scrutinize 
what they call Lipschitz programs, a large class of parametric optimisation 
problems where the objective function satisfies Lipschitzianity. They show 
that under weak assumptions the Lipschitz property is inherited by the value 
function. Our approach is different but complementary to theirs as we do not 
study parametric problems instead lifting the Lipschitz assumption on the 
objective function while still obtaining Lipschitzianity of the value 
function. Both \citet{morand2015nonsmooth} and \citet{Morand2018} derive many results of 
\citet{milgrom2002envelope} as special cases since the latter assumes the 
absolute continuity of the objective function. Our findings complements the 
results of \citet{milgrom2002envelope} as well as we consider discontinuous 
objective functions. More details on \citet{milgrom2002envelope} are provided 
in the next section.\\

To the best of our knowledge, this article is the first to explicitly examine the 
value function of optimisation problems with discontinuous objective 
functions. Thus, we prioritise clarity and keep notation, assumptions and 
proofs as simple as possible. This comes at the cost of generality. We highlight the relation between our results and classical approaches  in the main text [\citet{danskintheory}, \citet{borwein2010convex}, \citet{clarke2013functional}].\\

We close with a list of articles that relates to the present paper. \citet{Lauzier2021securitydesign} and \citet{Lauzier2021insurancedesign} applies the methodology of positioning choice problems to contracting problems in the context of entrepreneurial financing and insurance. The observation that this paper's tooling could be used to provide statements similar to Topkis's theorem is motivated by the latter's successful application to economic theory as seen in \citet{milgrom1994monotone} and \citet{quah2007comparative}. \citet{sinander2019converse}'s converse envelope theorem does in an opposite direction to this article, re-establishing an intuitive link between the envelope formula and first-order conditions when the objective function is absolutely continuous.
\break

\section{Positioning choice problems}

\subsection*{Preliminaries}
Let $\Theta$ be a set of parameters and let $Y(\theta)$ be a choice set given parameter $\theta$. Let the function $h:Y(\Theta)\times \Theta \rightarrow \mathbb{R}$ be the \textit{objective function}. The problem
\begin{align*}
    \sup_{y(\theta)\in Y(\theta)}h(y(\theta); \theta)
\end{align*}
is an \textit{optimisation problem}. The \textit{optimal choice correspondence} $\sigma: \Theta \rightrightarrows 2^{Y(\theta)}$ is
\begin{align*}
    \sigma(\theta)=\arg\max_{y(\theta)\in Y(\theta)}h(y(\theta); \theta)
\end{align*}
and the \textit{value function} $V:\Theta \rightarrow \mathbb{R}$ is
\begin{align*}
    V(\theta)=h(y^*(\theta);\theta)
\end{align*}
for $y^*(\theta) \in \sigma(\theta)$. An \textit{envelope theorem} is a statement about the rate of change of $V(\theta)$ when $\theta$ changes, often stated in terms of the derivative $\frac{\partial V(\theta)}{\partial \theta}$. Typically, envelope theorems rely on building a \textit{continuous selection}\footnote{A selection $f$ of a correspondence $F$ is a function such that for every $x\in domain(F)$ it is $f(x)\in F(x)$.} $y^*(\theta)$ of $\sigma(\theta)$ to use the chain rule and write 
\begin{align*}
    \frac{\partial V(\theta)}{\partial \theta}= \frac{\partial h(y^*(\theta);\theta)}{\partial \theta}=\frac{\partial h(y^*(\theta);\theta)}{\partial y^*(\theta)}\frac{\partial y^*(\theta)}{\partial \theta}. 
\end{align*} 
This approach relies on the family $\left\{h(\cdot; \theta): \theta \in \Theta\right\}$
being composed of functions satisfying some form of continuity. The simplest envelope theorem such as the one found in \citet{mas1995microeconomic} assumes that this family consists exclusively of differentiable functions. The basic statement of \citet{milgrom2002envelope} assumes that this family consists of absolutely continuous functions\footnote{ Lipschitzianity is assumed in \citet{morand2015nonsmooth}, \citet{Morand2018} and \citet{clarke2013functional}.}. However, continuity can be problematic in application. We show in this articles that it is sometimes possible to make statements about the derivative of a value function even when the objective function is discontinuous.

\subsection*{Definitions}
Consider a slightly different notation to avoid confusion. Let $1\leq n<\infty$ and let $C \subsetneq  \mathbb{R}^n$ be a closed box in $\mathbb{R}^n$ which contains zero\footnote{Precisely, we consider the usual Euclidean space $(\mathbb{R}^n, \vert \vert \cdot \vert \vert)$, where the $\vert \vert \cdot \vert \vert$ denotes the usual Euclidean distance, the metric induced by the $L^2$-norm.}. 
Let the function $f:C\rightarrow \mathbb{R}$ be the "payoff function" of being at position $y\in C$.
Let the function $g:C\times C \rightarrow \mathbb{R}$ be the "cost of movement function" of moving from being at $x\in C$ to $y$. Assume that $g$ is continuous, that $\sup_{y \in C}\vert f(y)\vert < +\infty$ and that $\max_{(x,y) \in C\times C}\vert g(x,y)\vert < +\infty$.  Further assume that the function $g$ is always an induced metric on $\mathbb{R}^n$ (see appendix A). This assumption is not essential to obtain the main results of this paper, but it greatly streamlines the proofs and thus the exposition. Let $x \in C$ be given and let $h(x,y)=f(y)-g(x,y)$. The optimisation problem
\begin{align*}
    \sup_{y\in C} h(x,y)
\end{align*}
is a \textbf{positioning choice problem}.\\

Assume that $f$ is almost everywhere continuous except on a set of points where 
\begin{align*}
    \mathrm{either}\, \limsup_{x\in C,\, x\rightarrow y}f(x)=f(y)\,\, \text{or}\, \liminf_{x\in C,\, x\rightarrow y}f(x)=f(y).
\end{align*}
For $x\in C$ given, $h(x,y)$ is the objective function and the optimal choice correspondence $\sigma: C \rightrightarrows 2^C$ is $$\sigma(x)=\arg\max_{y\in C}h(x,y).$$ A function $y^*:C \rightarrow C$ is a selection of $\sigma$ if for every $x \in C$, $y^*(x) \in \sigma(x)$. Letting $y^*(x)\in \sigma(x)$, the value function $V:C \rightarrow \mathbb{R}$ is
\begin{align*}
    V(x)= h(x,y^*(x))=f(y^*(x)) - g(x,y^*(x)).
\end{align*}

\subsubsection*{Example: the structure of the line at $n=1$}

Let $n=1$, $C=[0,2]$, $f(x)=x\cdot1_{x\geq 1}$ and $g(x,y)=\alpha \vert y-x \vert$ for $1_{x\geq 1}$ the indicator function, $\alpha > 1$ and $\vert \cdot \vert$ the usual Euclidean distance on $\mathbb{R}$. Notice that since $f$ is discontinuous the objective function also is. In other words, the family of functions $\{h(x,\cdot): x\in C\}$ consists exclusively of functions which are discontinuous. Thus, typical envelope theorems cannot be used to characterize the derivative $ \frac{\partial V(x)}{\partial x}$.\\

Notice however that there exists at least one pair $(x,y)$, $y>x$, for which we have
$$f(x)-g(x,x)=f(y)-g(y,x).$$
Since $\alpha >1$ this pair is unique and is $y=1$ and $x=1-1/\alpha$. The optimal choice correspondence is
\begin{align*}
    \sigma(x)= \begin{cases}\{x\} &\quad \text{if }x\in [0,\, 1-1/\alpha) \cup [1, 2] \\ \{1\} &\quad \text{if }x\in (1-1/\alpha, 1) \\ \{x,1\}&\quad \text{if }x=1-1/\alpha \end{cases}
\end{align*}
and so the value function is 
\begin{align*}
    V(x) = \begin{cases}0 &\quad \text{if }x\in [0,\, 1-1/\alpha]\\ 1-\alpha(1-x) &\quad \text{if }x\in (1-1/\alpha,1) \\x &\quad \text{if }x\in [1,2]. \end{cases}
\end{align*}
Clearly $V(x)$ is almost everywhere differentiable on $(0,2)$ except at $x\in \{1-1/\alpha, 1\}$ and has derivative given by
\begin{align*}
    \left. V'(x)\right \vert_{x\in (0,2)\setminus \{1-1/\alpha, 1\}}=\begin{cases}0 &\quad \text{if }x\in (0,\, 1-1/\alpha)\\ \alpha &\quad \text{if }x\in (1-1/\alpha,1) \\1 &\quad \text{if }x\in (1,2) \end{cases}
\end{align*}
Moreover, $V(x)$ is kinked at $x=1-1/\alpha$ with subdifferential given by $\underline{\partial}V(1-1/\alpha)=[0,\alpha]$ and at $x=1$ with superdifferential $\overline{\partial}V(1)=[1,\alpha].$\\

This examples shows that even when the objective function is not particularly "well-behaved" the value function is. Implicitly, this example uses the structure of the line to guarantee that $\sigma$ is upper hemicontinuous which guarantees a simple optimisation problem. Variations of it will be used in Section 3 for further discussions.

\subsubsection*{Example: $n=2$ and an intuitive argument}
The previous example can be generalized to additional dimensions. The next constructive example aims to provide the reader with more intuition about the main results of the article. Let $n=2$, $C=[0,2]^2$, $g(x,y)= \alpha \vert \vert y-x \vert \vert$, $\alpha >1$ and 
\begin{align*}
    f(x_1,x_2)= \begin{cases}1 &\quad \text{if simultaneously }x_1\geq 1 \text{ and }x_2\geq 1\\ 0 &\quad \text{otherwise}. \end{cases}
\end{align*}
Notice that $f$ is almost everywhere continuous except on the set
$$\mathrm{Dis}_f=\{(x,y):x=1, \, y\in [1,2]\}\cup\{(x,y):x=[1,2], \, y=1\}$$
where it is upper semicontinuous. To every point $(x,y)\in Dis_f \setminus \{(1,1)\}$
there exists a unique $(a,b)\neq (x,y)$ such that $$V\left((a,b)\right)= f\left((a,b)\right) - g\left((a,b),(a,b)\right)=0 = f\left((x,y)\right) - g\left((x,y),(a,b)\right).$$
This $(a,b)$ corresponds exactly to the point $x=1-1/\alpha$ in the previous example so that if $(x,y)=(1,y)$ then $(a,b)=(1-1/\alpha,y)$ and vice-versa for $(x,y)=(x,1)$. Let $K$ be the set of points that satisfies the previous condition. The arc of circle with center $(1,1)$ defined by
$$L:=\{(x,y):x\leq 1,y\leq 1, (x,y) \text{ belongs to a circle with center }(1,1)\text{ and radius } 1-1/\alpha\}$$
is also a set of points that satisfies the previous equality. That is, for every $(a,b)\in L$ it is
$$V((a,b))= f((a,b)) - g((a,b),(a,b))=0 = f((1,1)) - g((1,1),(a,b)).$$
Notice that the set $Kink_V:=\mathrm{Dis}_f\cup K\cup L$ contains all the (interior) points of non-differentiability of $V$ and is a set of measure zero with regard to the Lebesgue measure on $\mathbb{R}^2$. Let $int$ denotes the interior of a set. It is clear that $V$ is differentiable on $\mathcal{M}:=int(C)\setminus Kink_V$, with $\nabla V(x)=(0,0)$ for every point $(x,y)$ for which $\sigma((x,y)) = \{(x,y)\}$. Furthermore, whenever $(x,y)\in \mathcal{M}$ and $\sigma((x,y))=\{(a,b)\}$ $(a,b)\neq (x,y)$ then $(a,b)\in \mathrm{Dis}_f$ and the directional derivative of $V$ at $(x,y)$ in the direction of $(a-x,b-y)$ is\footnote{See Section 2 for a definition of the directional derivative.}
$$V'((x,y);(a-x,b-y))=\alpha\vert \vert (a-x, b-y)\vert \vert.$$
Constructing this example bare hands is tedious but insightful. It shows that there are higher dimensional environments where the value functions are "well-behaved" even when the objective functions are not. It is clear that the same problem could be cast in $n=3,4,...,N<+\infty$ dimensions and the value function would still be almost everywhere differentiable.

\section{Statements}
Let us start with a lemma that allows for simpler notation. Since $C$ is closed, $f$ is bounded and $g$ is continuous it is easy to show the following:

\begin{lemma}\label{Max_for_Closed}
Under the assumption made in section 1 there always exists an almost everywhere single-valued correspondence $\overline{f}: C \rightrightarrows 2^\mathbb{R}$ and a selection $\overline{f}^*$ of $\overline{f}$ such that 
\begin{align*}
    f=\overline{f}=\overline{f}^*  \quad \text{ almost everywhere}
\end{align*}
and for $\overline{h}(x,y)=\overline{f}^*(y)-g(x,y)$ it holds
\begin{align*}
    \sup_{y\in C} h(x,y) = \sup_{y\in C} \overline{h}(x,y)= \max_{y\in C} \overline{h}(x,y).
\end{align*}
\end{lemma}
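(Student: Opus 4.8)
The plan is to replace $f$ by its upper semicontinuous regularisation and then to exploit the fact that an upper semicontinuous function on a compact set attains its supremum. Concretely, I would set
\[
\overline{f}^*(y) = \limsup_{z\in C,\, z\to y} f(z),
\]
and define the correspondence by the order interval $\overline{f}(y) = \big[\liminf_{z\to y} f(z),\, \limsup_{z\to y} f(z)\big]$ for every $y\in C$. Since $f$ is bounded these objects are finite and well defined, and since $f$ is continuous almost everywhere the discontinuity set $D$ has Lebesgue measure zero. At each continuity point the liminf and limsup collapse to $f(y)$, so $\overline{f}(y)$ is a singleton off the null set $D$; hence $\overline{f}$ is almost everywhere single-valued, $\overline{f}^*$ (the upper endpoint) is a selection of $\overline{f}$, and $f=\overline{f}=\overline{f}^*$ almost everywhere. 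This disposes of the ``almost everywhere'' clauses of the statement.

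The first substantive step is to upgrade this to the pointwise bound $\overline{f}^*\ge f$ on all of $C$, which is exactly where the standing hypothesis on $f$ is used: at a point of discontinuity either $\limsup_{z\to y}f(z)=f(y)$ or $\liminf_{z\to y}f(z)=f(y)$, and in both cases $f(y)\le\limsup_{z\to y}f(z)=\overline{f}^*(y)$. In particular $f(y)$ always lies in the interval $\overline{f}(y)$, so $\overline{f}^*$ is a genuine selection and no upward spike of $f$ can sit strictly above its surrounding limit values. Combining this with the standard fact that the $\limsup$-regularisation is upper semicontinuous, and with continuity of $g(x,\cdot)$, it follows that $\overline{h}(x,\cdot)=\overline{f}^*(\cdot)-g(x,\cdot)$ is upper semicontinuous on the compact box $C$. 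By the Weierstrass extreme value theorem for upper semicontinuous functions its supremum is attained, giving $\sup_{y\in C}\overline{h}(x,y)=\max_{y\in C}\overline{h}(x,y)$.

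It then remains to prove that the two suprema coincide, $\sup_{y}h(x,y)=\sup_{y}\overline{h}(x,y)$. The inequality $\sup h\le\sup\overline{h}$ is immediate from $\overline{f}^*\ge f$. For the reverse inequality I would fix $y$ and choose a sequence $z_k\to y$ in $C$ with $f(z_k)\to\overline{f}^*(y)$, which exists by the very definition of the $\limsup$; continuity of $g$ then gives $g(x,z_k)\to g(x,y)$, so that $h(x,z_k)=f(z_k)-g(x,z_k)\to\overline{f}^*(y)-g(x,y)=\overline{h}(x,y)$. Since $h(x,z_k)\le\sup_z h(x,z)$ for every $k$, letting $k\to\infty$ yields $\overline{h}(x,y)\le\sup_z h(x,z)$, and taking the supremum over $y$ closes the argument.

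The routine parts are the measure-zero claim for $D$ and the upper semicontinuity of the regularisation, both of which are standard. The step I expect to demand the most care is the equality of the two suprema, since it is the one place where the continuity of $g$ and the one-sided semicontinuity of $f$ must work together: continuity of $g$ is what transports the approximating values $f(z_k)$ to the costs $g(x,z_k)$ without loss, while the hypothesis on $f$ is what guarantees $\overline{f}^*\ge f$, so that passing from $f$ to its regularisation neither destroys an existing maximiser nor manufactures a spurious one.
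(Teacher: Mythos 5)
Your proof is correct, and its central object is the same as the paper's: your regularisation $\overline{f}^*(y)=\limsup_{z\to y}f(z)$ coincides pointwise with the paper's selection, which keeps $f$ off the null set $D$ of genuine jumps and takes $\max\{\liminf,\limsup\}=\limsup$ on $D$ (these agree because the standing one-sided assumption forces $f\le\limsup$ everywhere, as you note); your interval-valued $\overline{f}$ versus the paper's two-point-set correspondence is immaterial. Where you genuinely diverge is in how attainment and the equality of suprema are established. The paper argues by cases: if $\sup_y h(x,y)$ is attained it stops; otherwise it picks a maximising point $y^*$ (written, somewhat loosely, as an element of the then-empty $\arg\max$) and uses continuity of $g$ to identify $\sup_y h(x,y)$ with $\overline{h}(x,y^*)$ --- but it never explicitly checks that lifting $f$ to its $\limsup$ values at the \emph{other} points of $D$ cannot create a strictly larger supremum. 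You prove precisely that missing inequality, $\overline{h}(x,y)\le\sup_z h(x,z)$ for every $y$, via approximating sequences and continuity of $g$, and you get attainment from two standard facts (the $\limsup$ regularisation is upper semicontinuous; an usc function on a compact set attains its maximum) rather than the paper's ad hoc maximising-sequence computation. What each buys: the paper's argument is shorter and self-contained, while yours is more systematic, avoids the empty-argmax abuse of notation, and closes the ``no spurious supremum'' gap that the paper leaves implicit.
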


The proof follows from setting $$\overline{f}(y)=\left\{z\in \mathbb{R}:\text{ either } \limsup_{x\in C,\, x\rightarrow y}f(x)=z \text{ or } \liminf_{x\in C,\, x\rightarrow y}f(x)=z \text{ or both}\right\}$$
and observing that for every $x\in C$ it is possible to build a selection for which the problem $\sup_{y\in C} \overline{h}(x,y)$ attains its supremum. In light of Lemma 1, the rest of the presentation uses $\max$ instead of $\sup$ and drops the subscript $x\in C$ while taking limits. \\ 

Consider a positioning choice problem $\max_{y\in C} h(x,y)$ satisfying the assumptions made in Section 1. Let $x$ be given and let $y^*\in cor(C)$ be a maximizer of $h(x,\cdot)$, where $cor$ denotes the algebraic interior of a set.\footnote{Recall that $cor(A)=int(A)$ whenever $A\subset \mathbb{R}^n$. See appendix A for more details} Since the objective function $h$ is discontinuous it is possible that the gradient $\nabla h(x,\cdot)$ is ill-defined at $y^*$. Thus, the standard first-order necessary condition 
$$0=\nabla h(x,y^*)$$
is not necessarily meaningful. There is, however, a suitable substitute notion.\\

The \textit{upper Dini derivative}\footnote{The definitions used here follow \citet{clarke2013functional} chapter 11.} of $h(x,\cdot)$ in the direction of $v\in \mathbb{R}^n$ is 
    \begin{align}
        dh(x,y;v) = \limsup_{\substack{w\rightarrow v\\ t \downarrow 0}} \frac{h(x,y+tw)-h(x,y)}{t} \tag{UDD}\label{UDD}
    \end{align}
and the \textit{Dini supperdifferential} of $h(x,\cdot)$ at $y$, denoted by $\partial_d h(x,y)$, is the set of $\zeta \in \mathbb{R}^n$ such that 
\begin{align}
    dh(x,y;v) \leq \langle \zeta, v \rangle \quad \forall v\in \mathbb{R}^n. \tag{DS}\label{DS}
\end{align}
Each $\zeta \in \partial_d$ is a \textit{Dini supergradient} and if $h(x,\cdot)$ is differentiable at $y^*\in \sigma(x)$ then $$\partial_dh(x,y)=\{\nabla h(x,y^*)\}.$$

\begin{theorem}[First-order necessary conditions for an interior solution on $C$] \label{FONC}
Under the assumptions of Section 1, for every given $x\in C$, if $y^*\in cor(C)$ and $ y^* \in \sigma(x)
$ then 
\begin{align}
    0\in \partial_d h(x,y^*) \label{CFOC} \tag{CFOC}
\end{align}
Moreover, if both $f$ and $g$ are differentiable at $y^*$ then
\begin{align}
0=\nabla h(x,y^*) = \nabla(f(y^*)-g(x,y^*))=\nabla f(y^*) - \nabla g(x,y^*).   \label{SFOC} \tag{SFOC}
\end{align}
\end{theorem}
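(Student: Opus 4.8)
The plan is to read off \eqref{CFOC} directly from the maximising property of $y^*$ together with the interiority assumption, and then to deduce \eqref{SFOC} as an immediate corollary. First I would unwind what the membership $0\in\partial_d h(x,y^*)$ means: by \eqref{DS} it is exactly the requirement that $dh(x,y^*;v)\leq\langle 0,v\rangle=0$ for every $v\in\mathbb{R}^n$. So the whole content of \eqref{CFOC} is the assertion that the upper Dini derivative of $h(x,\cdot)$ at the maximiser is nonpositive in every direction.

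To establish that, fix a direction $v$ and examine the difference quotients appearing in the $\limsup$ of \eqref{UDD}, namely $\frac{h(x,y^*+tw)-h(x,y^*)}{t}$ for $w$ near $v$ and $t\downarrow 0$. Because $y^*\in cor(C)=int(C)$, some ball about $y^*$ is contained in $C$; hence for $t$ small and $w$ bounded near $v$ the point $y^*+tw$ is feasible, i.e. $y^*+tw\in C$. Working with the max-attaining representative supplied by Lemma~\ref{Max_for_Closed}, the hypothesis $y^*\in\sigma(x)$ gives $h(x,y^*+tw)\leq h(x,y^*)$, so each such difference quotient is nonpositive. Passing to the $\limsup$ as $w\to v$, $t\downarrow 0$ yields $dh(x,y^*;v)\leq 0$, and since $v$ was arbitrary this is precisely \eqref{CFOC}.

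For \eqref{SFOC} I would use the fact, recorded just before the theorem, that if $f$ and $g(x,\cdot)$ are both differentiable at $y^*$ then so is $h(x,\cdot)$ and the Dini superdifferential collapses to the singleton $\partial_d h(x,y^*)=\{\nabla h(x,y^*)\}$. Combined with \eqref{CFOC}, this forces $\nabla h(x,y^*)=0$, and linearity of the gradient in $y$ splits it into $\nabla f(y^*)-\nabla g(x,y^*)=0$, which is \eqref{SFOC}.

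The argument is the discontinuous-setting counterpart of the classical statement that an interior maximiser has vanishing derivative, so I expect no deep obstacle. The only points that genuinely require care are bookkeeping ones: verifying that $y^*+tw$ remains feasible (this is exactly where interiority is used, and where the conclusion would fail for a boundary maximiser), and using the max-attaining representative of Lemma~\ref{Max_for_Closed} rather than the raw supremum so that the inequality $h(x,y^*+tw)\leq h(x,y^*)$ is literally available at every nearby feasible point. The substantive work of the paper lies in the definitions that make $\partial_d h$ well-defined and nonempty in this discontinuous regime, not in the necessary condition itself.
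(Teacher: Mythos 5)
Your proposal is correct and takes essentially the same route as the paper: both arguments derive (CFOC) by noting that interiority makes $y^*+tw$ feasible for small $t$, so the maximising property $y^*\in\sigma(x)$ (applied to the max-attaining representative of Lemma~\ref{Max_for_Closed}) forces every difference quotient, and hence $dh(x,y^*;v)$ for every $v$, to be nonpositive, which is exactly the membership $0\in\partial_d h(x,y^*)$. The only structural differences are that the paper first proves a separate inner lemma that $\partial_d h(x,y^*)$ is nonempty and well-defined (observing that discontinuity directions give $dh(x,y^*;v)=-\infty$ while a.e.\ continuity of $f$ yields a direction with finite Dini derivative), a step your argument subsumes since exhibiting $0$ as a supergradient already establishes nonemptiness, and that you spell out (SFOC) via the collapse $\partial_d h(x,y^*)=\{\nabla h(x,y^*)\}$ where the paper merely asserts that ``the other cases follow.''
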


\begin{proof}
We only have to show the case when $h$ is discontinuous at $y^*$ as the other cases follow. Recall that by Lemma \eqref{Max_for_Closed}, we can assume without lost of generality that
\begin{align*}
    \limsup_{y \rightarrow y^*} h(x,y) = h(x,y^*).
\end{align*}
\textbf{Lemma: $\partial_d h(x,y^*)$ is well-defined}\\
\textbf{Proof of the Lemma:} By definition it is $$h(x,y^*)=f(y^*)-g(x,y^*)= \limsup_{y\rightarrow y^*}f(y^*) -g(x,y^*).$$
For every direction $v\in \mathbb{R}^n$ for which there is a discontinuity it holds that 
    $$dh(x,y^*;v) = \limsup_{\substack{w\rightarrow v\\ t \downarrow 0}} \frac{h(x,y^*+tw)-h(x,y^*)}{t} = - \infty$$
    because $h(x,y^*+tw)-h(x,y^*) \rightarrow K<0$, $K$ constant. Since $f$ is almost everywhere continuous there exists a direction $v$ for which $\vert dh(x,y^*;v)\vert $ is finite. As $y^*\in cor(C)$ and $y^* \in \sigma(x)$ it is 
       $$\partial_d h(x,y^*)\neq \emptyset$$
       and $\partial_d h(x,y^*)$ is well-defined. $\Box$\\
       
It follows immediately from $y^*\in \sigma(x)$ that for every $v\in \mathbb{R}^n$ it holds
       $$dh(x,y^*;v) \leq 0 = \langle 0, v \rangle$$
       and $$0\in  \partial_d h(x,y^*),$$ as desired. \end{proof}\\

The proof is almost identical to the proof of necessity of first-order conditions in books on convex analysis and non-smooth optimisation such as \citet{borwein2010convex}. The difference is the use of the Dini superdifferential, which is well-defined at discontinuity points. To simplify the presentation the remainder of this section always consider positioning choice problems satisfying the assumptions of Section 1 unless specified otherwise. The next statement is the ad-hoc envelope theorem. It uses the \textit{Fr\'echet} derivative defined in appendix A. We remind the reader that the \textit{Fr\'echet} derivative is a generalisation of the standard gradient $\nabla$ and that the two notions coincides on finite-dimensional Euclidean spaces.

\begin{theorem}["Ad-Hoc" envelope theorem for positioning choice problems]\label{Metric_on_closed}
Positioning choice problems always have almost everywhere \textit{Fr\'echet} differentiable value functions $V$. 
\end{theorem}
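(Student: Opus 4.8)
The plan is to reduce the theorem to a single structural fact — that $V$ is locally Lipschitz on $C$ — and then invoke the statement of Rademacher's theorem recorded in Appendix A, which guarantees that a locally Lipschitz map between finite-dimensional Euclidean spaces is differentiable almost everywhere (with respect to Lebesgue measure). Since on $\mathbb{R}^n$ the \emph{Fr\'echet} derivative coincides with the gradient, this immediately yields almost everywhere \emph{Fr\'echet} differentiability. By Lemma \ref{Max_for_Closed} we may replace $h$ by $\overline{h}$ and assume the supremum defining $V$ is attained, so that $V(x)=\max_{y\in C}\overline{h}(x,y)$ is genuinely a maximum for every $x$.

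The crux is the Lipschitz estimate, and the key observation is that the discontinuities of $f$ are harmless precisely because $f$ does not depend on the parameter $x$. Fixing $y\in C$ and considering the single-parameter map $x\mapsto \overline{h}(x,y)=\overline{f}^*(y)-g(x,y)$, the term $\overline{f}^*(y)$ is constant in $x$, so the only $x$-dependence enters through $-g(x,y)$. Because $g$ is an induced metric it is comparable to the Euclidean distance, and the reverse triangle inequality gives $\vert g(x_1,y)-g(x_2,y)\vert \leq L\Vert x_1-x_2\Vert$ for a constant $L$ independent of $y$. Hence every member of the family $\{\overline{h}(\cdot,y):y\in C\}$ is $L$-Lipschitz with one and the same constant $L$.

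I would then deduce Lipschitzianity of $V$ from the elementary fact that a pointwise supremum of a uniformly $L$-Lipschitz family of real-valued functions is itself $L$-Lipschitz wherever it is finite. Concretely, for any $x_1,x_2$ and any $y$ one has $\overline{h}(x_1,y)\leq \overline{h}(x_2,y)+L\Vert x_1-x_2\Vert \leq V(x_2)+L\Vert x_1-x_2\Vert$, and taking the supremum over $y$ on the left gives $V(x_1)-V(x_2)\leq L\Vert x_1-x_2\Vert$; interchanging $x_1$ and $x_2$ yields $\vert V(x_1)-V(x_2)\vert \leq L\Vert x_1-x_2\Vert$. Equivalently, one can argue by revealed preference: a maximizer $y_1^*$ for $x_1$ is feasible at $x_2$, and the triangle inequality for the metric $g$ gives $V(x_1)-V(x_2)\leq g(x_2,y_1^*)-g(x_1,y_1^*)\leq g(x_1,x_2)\leq L\Vert x_1-x_2\Vert$. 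Since $f$ and $g$ are bounded, $V$ is finite everywhere, so $V$ is globally Lipschitz on $C$ and in particular locally Lipschitz on $\mathrm{int}(C)$, at which point Rademacher closes the argument.

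The main obstacle is not the final application of Rademacher's theorem — which is mechanical — but pinning down the uniform Lipschitz bound for the family $\{\overline{h}(\cdot,y)\}$ and verifying that the ``induced metric'' hypothesis on $g$ is exactly what delivers a constant $L$ independent of $y$. I would also make explicit why Lemma \ref{Max_for_Closed} is genuinely needed: it guarantees attainment so that $V$ is well-defined as a maximum and so that the revealed-preference form of the estimate is available, even though the supremum form of the argument requires only that $V$ be finite.
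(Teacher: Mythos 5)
Your proof is correct, but it follows a genuinely different route from the paper's. The paper reduces Theorem \ref{Metric_on_closed} to \emph{continuity} of $V$: it assumes $V$ is discontinuous at some $x$, extracts a sequence $x_n\rightarrow x$ along which the argmax sets $\sigma(x)$ and $\sigma(x_n)$ must eventually be disjoint, and uses continuity of $g$ to show that a maximizer $y^*(x)\in\sigma(x)$ eventually strictly dominates every $y^*(x_n)\in\sigma(x_n)$, a contradiction; it then asserts that, $C$ being a closed box, $g$ a metric and $f$ bounded, this continuity upgrades to Lipschitzianity, and finishes with Rademacher. You bypass the continuity step entirely: since $\overline{f}^*(y)$ is constant in $x$ and $g(\cdot,y)=p(y-\cdot)$ satisfies $\vert g(x_1,y)-g(x_2,y)\vert\leq p(x_2-x_1)\leq L\Vert x_1-x_2\Vert$ (reverse triangle inequality plus equivalence of norms on $\mathbb{R}^n$), the family $\{\overline{h}(\cdot,y):y\in C\}$ is equi-Lipschitz with one constant $L$, and a finite pointwise supremum of such a family is $L$-Lipschitz; Rademacher then applies. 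Your route buys three things: an explicit, quantitative Lipschitz constant; no sequential or contradiction argument; and, in the supremum form, no need for attainment at all, so Lemma \ref{Max_for_Closed} is needed only to make $V$ well-defined as a maximum (a point you correctly flag). It also repairs a real imprecision in the paper: continuity on a compact set does not by itself imply Lipschitzianity, so the paper's passage from ``$V$ continuous'' to ``$V$ Lipschitz'' is only justified by exactly the uniform estimate you make explicit --- in effect, your argument is the rigorous core of which the paper's proof is an elaboration, and the paper's contradiction step is, strictly speaking, dispensable. What the paper's approach buys instead is geometric information about the optimal choice correspondence (how $\sigma(x_n)$ relates to $\sigma(x)$ near a putative discontinuity), which is of independent interest for understanding positioning choice problems but is not needed for the theorem.
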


The logic of the proof is instructive. Since $C$ is closed, $f$ is bounded and $g$ is continuous, it suffice to prove the continuity of $V$ to show that it is Lipschitz and obtain, by Rademacher's theorem, that it is almost everywhere \textit{Fr\'echet}. Proving continuity of $V$ is relatively simple but a bit tedious. By the way of contradiction it is assumed that $V(x)$ is discontinuous at a point. Since $g$ is continuous this implies that there exists a converging sequence $x_n \rightarrow x$ along which (a) either $V(x)>V(x_n)$ or $V(x_n)>V(x)$ so (b) the set of maximizers $\sigma(x_n)$ and $\sigma(x)$ are disjoint. For this to happen it must be the case that $x$ is a discontinuity point of $f$. WLOG let $V(x)>V(x_n)$ and consider $y^*(x)\in \sigma(x)$. Since $g$ is continuous we can find a neighborhood around $x$ such that $y^*(x)$ strictly dominates every points of $\sigma(x_n)$, a contradiction.\\

Using Rademacher's theorem also ensure that every points of $V$ which are not \textit{Fr\'echet} differentiable have well-defined directional derivatives\footnote{Recall that the directional derivative of a function $h:C\rightarrow \mathbb{R}$ at $x\in C$ in the direction of $y\in C$ is
\begin{align*}
    h'(x;y):=\lim_{t\downarrow 0} \frac{h(x+ty) -h(x) }{t}
\end{align*}
when the limit exists.} in every direction. This property is summarized in the next corollary.

\begin{corollary}[Properties of the derivative]\label{properties_of_derivative}
The value function of a positioning choice problem always satisfies the following:
\begin{enumerate}
\item Every point of non-differentiability in the sense of \textit{Fr\'echet} have finite directional derivative in every directions;
    \item if $\sigma(x)=\{x\}$ then $V(x)=f(x)$ and $\nabla V(x)=\nabla f(x)$ whenever $\nabla f(x)$ is well-defined; 
    \item if $g(x,y)= \alpha \vert \vert y-x\vert \vert$ for $\alpha \geq 1$ then to every $x\neq y$, $x \in int(C),\,y\in C$, it holds that $\vert V'(x;y-x) \vert \leq \alpha\vert \vert y-x \vert \vert$. Equality holds whenever $y\in \sigma(x)$. 
\end{enumerate}
\end{corollary}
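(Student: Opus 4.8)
The plan is to derive all three properties from Theorem~\ref{Metric_on_closed} together with the structure of $V$ as a supremum of movement-adjusted payoffs. Recall from the proof of Theorem~\ref{Metric_on_closed} that $V$ is continuous and, in fact, locally Lipschitz; when $g(x,y)=\alpha\vert\vert y-x\vert\vert$ it is globally $\alpha$-Lipschitz because it is a supremum of the maps $x\mapsto \overline{f}^*(y)-\alpha\vert\vert y-x\vert\vert$, each of which is $\alpha$-Lipschitz in $x$. More generally, writing $g$ through its inducing norm, each map $x\mapsto g(x,y)$ is convex and hence possesses one-sided directional derivatives in every direction. I would also record at the outset that $\sigma$ is upper hemicontinuous: its graph is closed, since for $x_n\to x$ and $y_n\in\sigma(x_n)$ with $y_n\to y$ the upper semicontinuity of $\overline{h}$ and the continuity of $V$ give $\overline{h}(x,y)\geq\limsup\overline{h}(x_n,y_n)=\limsup V(x_n)=V(x)$, whence $y\in\sigma(x)$; closedness plus compactness of $C$ yields upper hemicontinuity. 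These two facts, Lipschitzianity and upper hemicontinuity, are the workhorses.

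For item~1, finiteness is immediate: at a point of \textit{Fr\'echet} non-differentiability the local Lipschitz constant $L$ gives $\vert V'(x;v)\vert\leq L\vert\vert v\vert\vert<\infty$ whenever the directional derivative exists, so only existence requires work. Here I would run a Danskin-type squeeze. Fixing a direction $v$ and a maximiser $y^*\in\sigma(x)$, the lower bound $V(x+tv)\geq \overline{f}^*(y^*)-g(x+tv,y^*)$ produces a difference quotient bounded below by the one-sided derivative of $-g(\cdot,y^*)$ at $x$ in direction $v$, which exists by convexity. For the matching upper bound I would take $y_t\in\sigma(x+tv)$, use $V(x)\geq\overline{f}^*(y_t)-g(x,y_t)$, and let $t\downarrow 0$; upper hemicontinuity forces the cluster points of $y_t$ into $\sigma(x)$, so the upper and lower quotients converge to a common value (a maximum of the corresponding one-sided derivatives over $\sigma(x)$). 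The main obstacle of the whole corollary sits exactly here: passing to the limit along $\sigma(x+tv)$ despite the discontinuity of $f$. This is precisely what Lemma~\ref{Max_for_Closed} (the maximum is attained) and the upper hemicontinuity of $\sigma$ are designed to supply.

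For item~2, since $g$ is an induced metric we have $g(x,x)=0$, so $V(x)=f(x)$; moreover $V(z)\geq f(z)-g(z,z)=f(z)$ for every $z$, so $V-f\geq 0$ with equality at $x$, i.e. $x$ minimises $V-f$. Consequently the lower estimate $V'(x;v)\geq f'(x;v)=\langle\nabla f(x),v\rangle$ holds in every direction. To upgrade this to $\nabla V(x)=\nabla f(x)$ I would first establish \textit{Fr\'echet} differentiability of $V$ at $x$: upper hemicontinuity and uniqueness of the maximiser give $y^*(z)\to x$, and combining $V(z)\geq f(z)$ with optimality of $x$ at parameter $x$ forces $y^*(z)-z=o(\vert\vert z-x\vert\vert)$, the strict domination implicit in $\sigma(x)=\{x\}$ making the relevant contraction constant positive. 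Feeding this back into $V(z)=f(y^*(z))-g(z,y^*(z))$ and using differentiability of $f$ at $x$ gives $V(z)=f(x)+\langle\nabla f(x),z-x\rangle+o(\vert\vert z-x\vert\vert)$, matching the lower bound; applying the two-sided estimate to $v$ and $-v$ then yields $\nabla V(x)=\nabla f(x)$.

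For item~3, the upper bound $\vert V'(x;y-x)\vert\leq\alpha\vert\vert y-x\vert\vert$ is just the $\alpha$-Lipschitz estimate applied to the direction $y-x$. For equality when $y\in\sigma(x)$, I would move straight toward the maximiser: with $x_t=x+t(y-x)$ and $t\in(0,1)$, feasibility of $y$ gives $V(x_t)\geq f(y)-\alpha\vert\vert y-x_t\vert\vert=f(y)-\alpha(1-t)\vert\vert y-x\vert\vert=V(x)+\alpha t\vert\vert y-x\vert\vert$, so the difference quotient is bounded below by $\alpha\vert\vert y-x\vert\vert$; combined with the Lipschitz upper bound this pins the directional derivative to $\alpha\vert\vert y-x\vert\vert$ exactly. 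The only genuinely delicate ingredient across all three items is the limit-passing via upper hemicontinuity in item~1 (and its reuse in item~2); item~3 and the finiteness half of item~1 are immediate from Lipschitzianity.
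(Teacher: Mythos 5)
Your preliminaries (the sup-of-uniformly-Lipschitz argument for the Lipschitzness of $V$, and the closed-graph argument for upper hemicontinuity of $\sigma$ via the upper semicontinuity of $\overline{h}$ from Lemma~\ref{Max_for_Closed}) are sound, and your item~3 is complete and correct --- it is exactly the computation the paper performs informally in its two-dimensional example. Item~2 is right in outline (the sandwich $V\geq f$ with equality at $x$ is what carries it), though your intermediate claim $y^*(z)-z=o(\Vert z-x\Vert)$ can fail in boundary cases where $\Vert\nabla f(x)\Vert$ matches the Lipschitz constant of the norm inducing $g$. The genuine gap is the existence half of item~1: your Danskin-type squeeze does not close, and the ``common value'' you assert --- $\max_{y^*\in\sigma(x)}$ of the one-sided derivative of $-g(\cdot,y^*)$ at $x$ --- is provably not the directional derivative of $V$. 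Test it on the paper's own first example ($n=1$, $C=[0,2]$, $f(y)=y\cdot 1_{y\geq 1}$, $g(x,y)=\alpha\vert y-x\vert$): for $x\in(0,1-1/\alpha)$ we have $\sigma(x)=\{x\}$ and $V\equiv 0$ near $x$, so $V'(x;v)=0$, whereas your formula returns $-g'(x,x;v)=-\alpha\vert v\vert$. What breaks is that $g(\cdot,y)$ is a norm of the difference and hence kinked at $x=y$, so $y\mapsto -g'(x,y;v)$ is \emph{not} upper semicontinuous at $y=x$; when the maximizers $y_t\in\sigma(x+tv)$ cluster at $x$ itself --- which is precisely what happens on the entire region where ``staying put'' is optimal --- your upper bound only yields $\limsup\leq\alpha\Vert v\Vert$ while your lower bound yields $\liminf\geq-\alpha\Vert v\Vert$, and nothing pins the quotient in between. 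Upper hemicontinuity of $\sigma$ cannot repair this: the obstruction is the kink of $g$ on the diagonal, not the discontinuity of $f$.

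For comparison, the paper gives no argument at all here: it declares the corollary ``an immediate consequence of Rademacher's theorem,'' which is likewise insufficient for item~1, since Lipschitz functions need not have one-sided directional derivatives at every point. In fact the existence claim cannot be rescued by any route: take $f$ itself $\alpha$-Lipschitz with such a pathological point, e.g.\ $f(y)=y\sin(\ln\ln(1/\vert y\vert))$ near $0$ (suitably extended, $\alpha=2$); then $x\in\sigma(x)$ for every $x$, hence $V=f$, and $V$ has no one-sided derivative at $0$ in any direction. So what you can actually prove is the finiteness statement conditional on existence, item~3 as you did, and a patched item~2; the existence portion of item~1 is a gap in your proposal and, more importantly, a flaw in the statement itself that neither your squeeze nor the paper's appeal to Rademacher closes.
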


The corollary is an immediate consequence of Rademacher's theorem and its proof does not provide further insights. Intuitively, Theorem \ref{Metric_on_closed} and its corollary provides a limited version of Danskin's theorem (\citet{danskintheory}) in the context of positioning choice problems. The rest of this section gathers results which are useful in application. Following \citet{milgrom2002envelope}, it is sometimes handy to write the value function as an integral.
\begin{fact}[Integral representation of the value function] \label{integral_representation}
Let $n=1$ and assume $C=[0,M]$ for $0<M<+\infty$. There exists a continuous function $v_0:C \rightarrow \mathbb{R}$ such that for every $\Tilde{x}\in C$ it holds
\begin{align*}
    V(\Tilde{x}) = V(0) + \int_0^{\Tilde{x}}  v_0(x)dx. \tag{IRVF} \label{IRVF}
\end{align*}
\end{fact}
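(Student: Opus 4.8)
The plan is to leverage Theorem~\ref{Metric_on_closed}, which already guarantees that the value function $V$ is locally Lipschitz. On the compact interval $C=[0,M]$ local Lipschitzness upgrades to global Lipschitzness, so $V$ is Lipschitz on $[0,M]$ with some constant $L$. A Lipschitz function on a bounded interval is absolutely continuous, and absolute continuity is precisely the hypothesis under which the Lebesgue fundamental theorem of calculus applies. This is the structural fact I would build the representation on.

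First I would record that, by Rademacher's theorem (or, in $n=1$, directly from absolute continuity), $V$ is differentiable almost everywhere on $(0,M)$ and its derivative satisfies $|V'|\le L$ off a Lebesgue-null set, hence $V'$ is integrable. The Lebesgue version of the fundamental theorem of calculus then yields, for every $\tilde x\in C$,
\[
V(\tilde x)=V(0)+\int_0^{\tilde x} V'(x)\,dx,
\]
where $V'$ is defined outside a null set. Setting $v_0:=V'$ on the set of differentiability points produces a candidate integrand satisfying the required identity, independently of how $v_0$ is assigned on the negligible exceptional set. Corollary~\ref{properties_of_derivative} further identifies $v_0(x)$ with the two-sided derivative of $V$ wherever the latter exists and guarantees finite one-sided directional derivatives elsewhere, so the integrand is well controlled pointwise.

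The hard part is the word \emph{continuous}. Absolute continuity of $V$ only delivers an \emph{integrable} integrand, not a continuous one: if some continuous $v_0$ satisfied the displayed identity for all $\tilde x$, then $V$ would be continuously differentiable with $V'=v_0$, contradicting the kinks already exhibited in the $n=1$ example, where $V'$ jumps from $0$ to $\alpha$ at $x=1-1/\alpha$ and from $\alpha$ to $1$ at $x=1$. I would therefore expect the statement to require reading ``continuous'' as ``integrable'' (equivalently $v_0\in L^1(C)$), which is exactly what the Lebesgue fundamental theorem of calculus supplies; genuine continuity would instead demand extra hypotheses forcing $V\in C^1$, for instance continuity of $f$ together with a selection $y^*(\cdot)$ of $\sigma$ varying continuously enough that $v_0(x)=\partial_x h(x,y^*(x))$ is itself continuous. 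Accordingly, my proof would establish the integral representation cleanly via global Lipschitzness, absolute continuity, and the Lebesgue fundamental theorem of calculus, and I would flag the continuity of $v_0$ as the one step needing either a weakening of the claim to integrability or an additional regularity assumption on $f$ and the optimal selection.
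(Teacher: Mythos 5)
Your proposal follows essentially the same route as the paper's own proof: Lipschitzness of $V$ (from Theorem \ref{Metric_on_closed} together with compactness of $C$), almost-everywhere differentiability, and the fundamental theorem of calculus applied to $v_0:=V'$. The substantive difference is your handling of the word \emph{continuous}, and here your objection is correct and exposes a genuine error in the printed proof: the paper disposes of this point with the single assertion that ``continuity of $v_0$ is also ensured by the Lipschitzianity of $V$,'' which is false. As you observe, if a continuous $v_0$ satisfied \eqref{IRVF} for every $\tilde{x}$, then $V$ would be $C^1$, contradicting the paper's own $n=1$ example, where $V'$ jumps from $0$ to $\alpha$ at $x=1-1/\alpha$ and from $\alpha$ to $1$ at $x=1$; since any admissible $v_0$ must agree with $V'$ almost everywhere, no continuous choice exists there. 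A related blemish is the paper's claim that $v_0$ may be taken Riemann integrable: the a.e.-derivative of a Lipschitz function is bounded and measurable, hence Lebesgue integrable, but need not be (a.e. equal to) a Riemann integrable function in general, so your Lebesgue-FTC formulation via absolute continuity is the technically correct one. In short, your proof of the representation itself is sound and somewhat more careful than the paper's, and your conclusion that ``continuous'' must be weakened to ``integrable'' (or that additional regularity on $f$ and the optimal selection must be assumed) is the right reading of Fact \ref{integral_representation} as stated.
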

This representation is a simple application of the second fundamental theorem of calculus. The statement can be generalized to higher dimensions, but doing so does not provide further insights on the behaviour of positioning choice problems.\\

The assumption that $g$ is an induced metric is useful for presenting properties of positioning choice problems. However, it is restrictive.

\begin{proposition}[Sufficient conditions for monotonicity of the value function]\label{monotonocity}
Let $g$ be $$g(x,y):=\sum_{i=1}^n g_i(y_i-x_i)$$ for $g_i(y_i-x_i)$ continuous, non-decreasing and almost everywhere differentiable except maybe at $0$. Then the value function $V(x)$ is almost everywhere differentiable. Moreover, if $g$ is positive and for every $y_i<0$ it holds $g_i(y_i)= 0$ then $V$ is non-decreasing: if $y>>x$ then $V(y)\geq V(x)$. 
\end{proposition}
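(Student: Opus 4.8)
The plan is to isolate a single structural property from which both assertions follow: the value function $V$ is non-decreasing in each coordinate separately. Once this is in hand, the comparative-statics claim is immediate, and almost everywhere differentiability follows from the multivariate analogue of Lebesgue's theorem on the differentiability of monotone functions.

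First I would prove the monotonicity. Fix $x,x'\in C$ with $x'\geq x$ coordinate-wise, and let $y^*\in\sigma(x)$ (the maximum is attained as in Lemma \ref{Max_for_Closed}), so that $V(x)=f(y^*)-\sum_{i=1}^n g_i(y^*_i-x_i)$. Since $y^*\in C$ remains feasible at $x'$, we have $V(x')\geq f(y^*)-\sum_{i=1}^n g_i(y^*_i-x'_i)$. Because $x'_i\geq x_i$ gives $y^*_i-x'_i\leq y^*_i-x_i$ and each $g_i$ is non-decreasing, it follows that $g_i(y^*_i-x'_i)\leq g_i(y^*_i-x_i)$ for every $i$, whence $V(x')\geq f(y^*)-\sum_{i=1}^n g_i(y^*_i-x_i)=V(x)$. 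In particular $y\gg x$ implies $V(y)\geq V(x)$, which is the stated monotonicity. I would remark that this inequality uses only that each $g_i$ is non-decreasing; positivity of $g$ and the condition $g_i(y_i)=0$ for $y_i<0$ are not needed for the inequality itself but fix the economic reading that moving toward lower coordinates is costless.

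For almost everywhere differentiability I would first record that $g=\sum_i g_i$ is continuous, so the continuity argument used in the proof sketch of Theorem \ref{Metric_on_closed} — which relies only on continuity of $g$, boundedness of $f$ and closedness of $C$, not on $g$ being a metric — carries over verbatim and yields continuity of $V$. The decisive step is then to pass from coordinate-wise monotonicity to differentiability. Note that Rademacher's theorem is not directly available here, since the $g_i$ are only assumed continuous and non-decreasing (for instance $g_i(t)=\sqrt{\max(t,0)}$ is admissible) and hence $V$ need not be locally Lipschitz. Instead I would invoke the fact that a function monotone in each variable separately is differentiable almost everywhere. The existence of the partials $\partial_i V$ almost everywhere is the easy half: for almost every line parallel to the $i$-th axis the restriction of $V$ is a bounded monotone function of one variable, hence differentiable almost everywhere by Lebesgue's theorem, and Fubini assembles these into almost everywhere existence of $\partial_i V$ on $\mathrm{int}(C)$.

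The hard part will be upgrading almost everywhere existence of partials to almost everywhere total (Fr\'echet) differentiability, since separate monotonicity does not a priori control the joint behaviour needed for a genuine linear approximation. This is precisely where the multivariate extension of Lebesgue's monotone differentiation theorem does the work, and I would either cite it or reconstruct it from the monotone structure. I regard verifying this analytic input as the main obstacle, with everything else reducing to the short monotonicity computation above.
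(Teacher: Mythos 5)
Your proof is correct, but it takes a genuinely different route from the paper's --- and your route is the one that actually works. The paper proves the differentiability claim in one line: since $C$ is closed, $f$ is bounded and the $g_i$ are continuous, $V$ is Lipschitz, and Rademacher's theorem applies. You rejected that route for exactly the right reason. The Lipschitz estimate in the paper's Theorem \ref{Metric_on_closed} uses the norm-induced structure of $g$ (via $\vert p(y-x)-p(y-x')\vert \leq p(x-x')$), and Proposition \ref{monotonocity} deliberately drops that structure; your admissible example $g_i(t)=\sqrt{\max(t,0)}$ shows the Lipschitz claim then fails. Concretely, with $n=1$, $C=[0,2]$, $f=1_{\{y\geq 1\}}$, one computes $V(x)=1-\sqrt{1-x}$ on $[0,1]$, whose difference quotients blow up at $x=1$, while $V$ is nonetheless still a.e.\ differentiable and non-decreasing, as the proposition asserts. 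So where the paper leans on Rademacher, you must --- and do --- lean on the multivariate analogue of Lebesgue's monotone differentiation theorem: $V$ is non-decreasing in each coordinate, hence a.e.\ Fr\'echet differentiable. That analytic input is a genuine theorem in the literature (Chabrillac and Crouzeix, ``Continuity and differentiability properties of monotone real functions of several real variables,'' Mathematical Programming Study 30 (1987), 1--16), so citing it legitimately closes what you call the hard part; and you are right that the Fubini argument only yields a.e.\ existence of partial derivatives, which is strictly weaker, so the citation (or a reconstruction) cannot be avoided.

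On monotonicity, your argument is also cleaner and more general than the paper's. The paper splits into cases according to whether $f$ is non-decreasing and, in the remaining case, invokes the assumption $g_i(t)=0$ for $t<0$ through a display that implicitly requires moving \emph{upward} (from $y\ll x$ back to $x$) to be costless --- the wrong direction for that assumption --- and concludes from the existence of a single ball, which does not establish global monotonicity. Your comparison --- for each fixed $y$, raising $x$ to $x'\geq x$ weakly lowers every cost term because $y_i-x_i'\leq y_i-x_i$ and $g_i$ is non-decreasing, hence $V(x')\geq V(x)$ --- is airtight, needs no case distinction (and in fact not even attainment of the maximum, since one may compare suprema directly), and proves the stronger statement that monotonicity of $V$ follows from monotonicity of the $g_i$ alone, confirming your remark that positivity and costless downward movement play no role in the inequality.
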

Proposition \ref{monotonocity} is an example of a class of functions $g$ which are useful in economics. \citet{Lauzier2021securitydesign} considers the case when $n=1$ and $g(x,y)=\max\{0,y-x\}$ is interpreted as a manipulation technology of the observable profit of a firm.  A simplified version of the narrative goes as follows. A business owner hires a manager to take care of the company. The real profit $x$ randomly realizes at the end of a quarter. The manager observes $x$ and reports $y$, namely, the accounting profit of the quarter. The owner observes only the accounting profit $y$. Before reporting, the manager can manipulate the accounting profit by either burning some money (if $y<x$ then $g(x,y)=0$) or injecting liquidities (if $y>x$ then $g(x,y)=y-x$). Proposition \ref{monotonocity} implies that the manager's pay must be non-decreasing in observable profit $y$ because the value function of the optimisation problem defined by this simple game is monotonic.\\

Many optimisation problems which are important in application are defined on $\mathbb{R}^n$ and not on a closed box $C$. Extending Theorem \ref{Metric_on_closed} to such setting can be challenging. Second-order derivatives are not defined for discontinuous functions, and second-order sufficient conditions for optimality cannot be used (i.e. Karush-Kuhn-Tucker Theorem). Proposition \ref{Positioning choice problems and interior solutions} shows how to extend Theorem \ref{Metric_on_closed} to such setting. The next useful intermediate result is stated independently as a lemma for ease of presentation.

\begin{lemma}[A plane is a plane]\label{Plane is a plane}
 Let $f:\mathbb{R}^n\rightarrow \mathbb{R}$ be a plane and let $g:\mathbb{R}^n\times \mathbb{R}^n \rightarrow \mathbb{R}_+$ be a positive and convex transformation of the Euclidean distance, i.e $g(x,y)=\Tilde{g}(\vert \vert y-x\vert \vert)$ for some positive and strictly convex function $\Tilde{g}:\mathbb{R}_+ \rightarrow \mathbb{R}_+$. Then for every finite $x$ the optimisation problem
\begin{align*}
    \max_{y\in \mathbb{R}^n} h(x,y)
\end{align*}
attains a finite maximum. Moreover, the value function $V(x)$ is a plane parallel to $f(x)$ and hence continuously differentiable with 
$\nabla V = \nabla f$.
\end{lemma}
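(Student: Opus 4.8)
The plan is to exploit the translation structure hidden in the problem. Since $f$ is a plane, write $f(y)=\langle \nabla f, y\rangle + c$ for a fixed vector $\nabla f\in\mathbb{R}^n$ and constant $c\in\mathbb{R}$, so that $h(x,y)=\langle \nabla f, y\rangle + c - \tilde{g}(\vert\vert y-x\vert\vert)$. The decisive observation is that the change of variables $z=y-x$ decouples the parameter $x$ from the genuine optimisation:
\begin{align*}
h(x,x+z)=f(x)+\big[\langle \nabla f, z\rangle - \tilde{g}(\vert\vert z\vert\vert)\big].
\end{align*}
Writing $\psi(z):=\langle \nabla f, z\rangle - \tilde{g}(\vert\vert z\vert\vert)$, the bracketed term does not depend on $x$, so that $V(x)=f(x)+\sup_{z\in\mathbb{R}^n}\psi(z)$. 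If the supremum is a finite constant, the whole statement follows at once.

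Next I would establish that $\sup_z\psi(z)$ is attained and finite. Because $\langle \nabla f, z\rangle \leq \vert\vert \nabla f\vert\vert\,\vert\vert z\vert\vert$ with equality along the ray $z=r\,\nabla f/\vert\vert \nabla f\vert\vert$, the multidimensional problem collapses to the one-dimensional concave programme
\begin{align*}
\sup_{z\in\mathbb{R}^n}\psi(z)=\sup_{r\geq 0}\big[\vert\vert \nabla f\vert\vert\, r-\tilde{g}(r)\big].
\end{align*}
The map $r\mapsto \vert\vert \nabla f\vert\vert\, r-\tilde{g}(r)$ is strictly concave, being an affine function minus a strictly convex one. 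I would argue that $\tilde{g}$ grows fast enough to dominate the linear term, so the objective is coercive (tends to $-\infty$ as $r\to\infty$); strict concavity then yields a unique maximiser $r^*$, characterised by the first-order condition $\tilde{g}'(r^*)=\vert\vert \nabla f\vert\vert$. The corresponding maximiser of $\psi$ is the unique $z^*=r^*\,\nabla f/\vert\vert \nabla f\vert\vert$ (and $z^*=0$ when $\nabla f=0$), and $K:=\psi(z^*)=\sup_z\psi(z)$ is a finite constant.

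With $K$ in hand the conclusion is immediate: $V(x)=f(x)+K$ for every finite $x$, an affine map differing from $f$ only by the additive constant $K$. Hence $V$ is a plane parallel to $f$; it is continuously (indeed infinitely) differentiable with $\nabla V=\nabla f$ everywhere. Moreover the optimiser is $y^*(x)=x+z^*$, a finite point, which simultaneously confirms that the maximum is attained for every finite $x$.

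I expect the main obstacle to be precisely the attainment and finiteness of $\sup_z\psi(z)$. Strict convexity alone guarantees uniqueness of the candidate maximiser but not boundedness of the supremum: a strictly convex $\tilde{g}$ possessing a linear asymptote of slope below $\vert\vert \nabla f\vert\vert$ would make $\psi$ unbounded above and destroy the finite maximum. The careful step is therefore to extract from the positivity and convexity hypotheses on $\tilde{g}$ the growth needed to dominate $\vert\vert \nabla f\vert\vert\, r$, thereby ruling out escape to infinity and ensuring that $\tilde{g}'(r^*)=\vert\vert \nabla f\vert\vert$ pins down a genuine interior maximiser. Once coercivity is secured, the remainder of the argument is routine.
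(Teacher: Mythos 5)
Your core argument is the same as the paper's, made more explicit. The paper's proof also runs on translation invariance: it picks a maximiser $y^*$ at a single point $\tilde{x}$, sets $z=y^*-\tilde{x}$, and uses translation invariance of $g$ plus planarity of $f$ to conclude $x+z\in\sigma(x)$ for every $x$, hence that $V$ is a plane parallel to $f$. Your identity $V(x)=f(x)+\sup_{z}\bigl[\langle\nabla f,z\rangle-\tilde{g}(\Vert z\Vert)\bigr]$ is that same idea in cleaner form, and the collapse to the one-dimensional programme $\sup_{r\geq 0}\bigl[\Vert\nabla f\Vert r-\tilde{g}(r)\bigr]$ is a refinement the paper does not spell out.

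The difference lies in the treatment of attainment of a finite maximum, and the obstacle you flag there is real --- but it is a defect of the lemma as stated, not of your argument. The paper's proof disposes of this step by pure assertion (``Under the assumption that $g$ is a positive and convex transformation of the Euclidean distance it holds that the optimisation problem \dots attains a finite supremum''), and no argument can be supplied: positivity and strict convexity do not force superlinear growth. Concretely, $\tilde{g}(r)=\sqrt{1+r^2}$ is positive and strictly convex with asymptotic slope $1$, so for any plane with $\Vert\nabla f\Vert>1$ the objective $\Vert\nabla f\Vert r-\tilde{g}(r)$ tends to $+\infty$ and no finite maximum exists; attainment can also fail with a finite supremum, e.g.\ $\tilde{g}(r)=e^{-r}$ with $\nabla f=0$. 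So the growth condition you were hoping to ``extract'' from the hypotheses simply is not there: the lemma needs an additional assumption such as $\lim_{r\to\infty}\tilde{g}(r)/r>\Vert\nabla f\Vert$ (or outright superlinearity of $\tilde{g}$), under which your coercivity-plus-strict-concavity argument closes the proof --- with the small caveat that a convex $\tilde{g}$ need not be differentiable, so the first-order condition should read $\Vert\nabla f\Vert\in\partial\tilde{g}(r^*)$ rather than $\tilde{g}'(r^*)=\Vert\nabla f\Vert$. Modulo that added hypothesis, your proof is complete and is in fact more rigorous than the one in the paper.
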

The proof of Lemma \ref{Plane is a plane} is straightforward and need not be covered.
\begin{proposition}[Positioning choice problems and interior solutions]\label{Positioning choice problems and interior solutions}
Let $f$ and $g$ be as in Lemma \ref{Plane is a plane} with $f$ non-decreasing. Let $\Tilde{f}:\mathbb{R}^n\rightarrow \mathbb{R}$ be a non-decreasing function which is bounded by $f$. Then the value function 
\begin{align*}
    \Tilde{V}(x)=\Tilde{h}(x,\Tilde{y}^*(x))=\Tilde{f}(\Tilde{y}^*(x))- g(x,\Tilde{y}^*(x)) \quad \text{for}\quad \Tilde{y}^*(x)\in \Tilde{\sigma}(x)=\arg\max_{y\in \mathbb{R}^n}\Tilde{h}(x,y)
\end{align*}
is almost everywhere \textit{Fr\'echet} differentiable and is bounded by the function
\begin{align*}
    V(x)=h(x,y^*(x))=f(y^*(x))- g(x,y^*(x)) \quad \text{for}\quad y^*(x)\in \sigma(x)=\arg\max_{y\in \mathbb{R}^n}h(x,y).
\end{align*}
\end{proposition}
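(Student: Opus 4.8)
The plan is to prove the two assertions separately. The inequality $\tilde V(x)\le V(x)$ is the easy half and follows from a monotone comparison of objectives once attainment of both maxima is secured; almost everywhere Fréchet differentiability of $\tilde V$ is the substance, and I would obtain it by \emph{localizing} the problem posed on $\mathbb{R}^n$ to a positioning choice problem on a compact box, where the machinery behind Theorem \ref{Metric_on_closed} applies. Throughout I write $f(y)=\langle a,y\rangle+b$ for the plane and $g(x,y)=\tilde g(\|y-x\|)$.

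First I would secure attainment and the boundedness claim. Lemma \ref{Plane is a plane} gives that $h(x,\cdot)$ attains a finite maximum and that $V$ is the parallel plane with $\nabla V=\nabla f$. Since $\tilde f\le f$ we have $\tilde h(x,y)\le h(x,y)$ pointwise, and because $h(x,y)\to-\infty$ as $\|y\|\to\infty$ — this is exactly what attainment of the plane problem forces, namely that the asymptotic slope of $\tilde g$ exceeds $\|a\|$ — the objective $\tilde h(x,\cdot)$ is coercive. Monotonicity of $\tilde f$ delivers the one-sided semicontinuity required in Section 1 (after the almost-everywhere modification of Lemma \ref{Max_for_Closed} if needed), so $\tilde\sigma(x)\ne\emptyset$. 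Taking maxima in $\tilde h\le h$ then yields $\tilde V(x)\le V(x)$.

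The heart of the argument is a uniform localization. Fix any compact box $B\subset\mathbb{R}^n$. For $x\in B$ and any maximizer $\tilde y^*(x)$ one has $\tilde h(x,\tilde y^*(x))\ge\tilde h(x,x)=\tilde f(x)-\tilde g(0)\ge c_0$, where $c_0$ is a constant over $B$ obtained from the value of the non-decreasing $\tilde f$ at the least corner of $B$. Combining this with $\tilde h(x,y)\le\langle a,y\rangle+b-\tilde g(\|y-x\|)$ and $\langle a,y\rangle\le\|a\|\,\|x\|+\|a\|\,\|y-x\|$ gives
\begin{align*}
\tilde g(\|\tilde y^*(x)-x\|)-\|a\|\,\|\tilde y^*(x)-x\|\le\|a\|\,\|x\|+b-c_0 .
\end{align*}
Since $\tilde g$ is convex, $\tilde g'$ is non-decreasing and Lemma \ref{Plane is a plane} forces $\lim_{t\to\infty}\tilde g'(t)>\|a\|$, so the left-hand side tends to $+\infty$; hence $\|\tilde y^*(x)-x\|$ is bounded uniformly over $x\in B$. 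Thus there is a larger closed box $C\supseteq B$ containing every such maximizer, and for all $x\in B$ the unconstrained maximization may be replaced by maximization over $C$:
\begin{align*}
\tilde V(x)=\max_{y\in C}\tilde h(x,y),\qquad x\in B .
\end{align*}

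Finally I would run the argument of Theorem \ref{Metric_on_closed} on this box. The restricted problem meets the ingredients that proof actually uses: $C$ is compact, $\tilde f$ is bounded on $C$ (a monotone function on a box), and $g(x,y)=\tilde g(\|y-x\|)$ is continuous and, because $\tilde g$ is convex hence locally Lipschitz, Lipschitz in $x$ uniformly over $C\times C$. This last point is what upgrades continuity of $\tilde V$ to Lipschitz continuity on $B$ through the estimate $|\tilde V(x)-\tilde V(x')|\le\sup_{y\in C}|g(x,y)-g(x',y)|$, and Rademacher's theorem then yields almost everywhere Fréchet differentiability of $\tilde V$ on $B$. Covering $\mathbb{R}^n$ by countably many such boxes finishes the proof. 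I expect the main obstacle to be precisely this localization — establishing the uniform bound on maximizers over compact sets of parameters — together with the bookkeeping that the induced-metric hypothesis of Section 1, which a strictly convex $\tilde g$ does not satisfy, is genuinely inessential to the proof of Theorem \ref{Metric_on_closed} (as the text remarks), so that its conclusion transfers to the present setting.
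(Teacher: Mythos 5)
Your proposal is correct, and its skeleton — localize the maximizers to a bounded set, establish (local) Lipschitz continuity of $\tilde V$, then invoke Rademacher's theorem — is the same as the paper's, but your execution of the two key steps is genuinely different and noticeably more rigorous. For localization, the paper works pointwise: it introduces the non-negativity sets $Z(x)=\{y:f(y)-g(x,y)\ge 0\}$ and $\tilde Z(x)=\{y:\tilde f(y)-g(x,y)\ge 0\}$, notes $\tilde\sigma(x)\subset\tilde Z(x)\subset Z(x)$ because $f$ dominates $\tilde f$, and uses this containment both to bound the maximizers and to conclude $\tilde V\le V$; you instead derive coercivity from the attainment clause of Lemma \ref{Plane is a plane} (asymptotic slope of $\tilde g$ exceeding $\|a\|$) and make the resulting bound on $\|\tilde y^*(x)-x\|$ \emph{uniform over a compact box of parameters}, which is exactly what lets you reduce to the compact-box setting of Theorem \ref{Metric_on_closed} and then cover $\mathbb{R}^n$ by countably many boxes. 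For the Lipschitz step, the paper merely asserts that the local Lipschitzianity of $\tilde V$ is ``insured by'' that of $V$ — as literally stated a non sequitur, since being dominated by a plane does not confer Lipschitz continuity — whereas your estimate $|\tilde V(x)-\tilde V(x')|\le\sup_{y\in C}|g(x,y)-g(x',y)|$ supplies the missing argument, and it is your uniform localization that makes its constant uniform on each box. Two minor caveats, neither of which puts you below the paper's own standard of rigor: (i) your claim that convexity of $\tilde g$ gives Lipschitz continuity on $[0,T]$ can fail at the endpoint if $\tilde g'(0^+)=-\infty$ (a positive strictly convex function on $\mathbb{R}_+$ can have infinite one-sided slope at $0$); the paper ignores this corner case entirely, but it deserves a remark; (ii) you are right that a non-decreasing $\tilde f$ need not equal either one-sided limit at a jump, so routing attainment through the Lemma \ref{Max_for_Closed}-type modification, which leaves the supremum unchanged, is the correct fix rather than a cosmetic one.
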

The intuition behind proposition \ref{Positioning choice problems and interior solutions} is as follows. For $\Tilde{f}$ a discontinuous function of interest is found a function $f$ which bounds it. The latter is taken continuous and locally bounded so that standard second-order conditions can be used solving problem $\max_y\{f(y)-g(x,y)\}$. The function $g$ must be "sufficiently convex" so that "$g$ crosses both $f$ and $\Tilde{f}$ from below". This is enough to guarantee that
\begin{align*}
    \Tilde{\sigma}(x)\subset \{y\in \mathbb{R}^n: f(y)- g(x,y)\geq 0 \}
\end{align*} and conclude that $\Tilde{V}(x)$ is finite and bounded by $V(x)$. Similarly, the local Lipschitzianity of $\Tilde{V}$ is insured by the local Lipschitzianity of $V$. Rademacher's theorem then guarantees that $\Tilde{V}$ is almost everywhere \textit{Fr\'echet}.

\section{Discussion}
We start by providing a simple example of a positioning choice problem where Theorem \ref{FONC} does not hold but where the value function can be characterised in a way paralleling Theorem \ref{Metric_on_closed}. The example emphasis the role of $\mathbb{R}^n$ being a \textit{Dedekind-complete ordered field}. It also suggests that statements similar to Theorem \ref{Metric_on_closed} can be obtained for optimisation problems defined on other spaces as well, for example for optimisation problems defined on Riemannian manifolds. We then conclude with an example where this "conjecture" is correct.

\subsection*{Pathological cases}
Positioning choice problems are defined on finite-dimensional Euclidean spaces. This choice is consequential for three reasons. First, finite-dimensionality allows to define maxima in straightforward ways. Second, it allows to use Rademacher's theorem. Third, Theorem \ref{Metric_on_closed} implicitly exploits the properties of Euclidean spaces being \textit{Dedekind-complete ordered fields}. The next example shows why this is important.\\

Let $n=1$, $C=[0,2]$, $g(x,y)= \vert x-y \vert$ and \begin{align*}
    f(y)=\begin{cases}1 &\quad  \text{if } x=1 \\ 0 & \quad \text{otherwise}.\end{cases}
\end{align*} For $x\in [0,2]$ given, the optimisation problem is
$\max_{y\in [0,2]}h(x,y)$, for which it is easy to check that the optimal choice correspondence is $\sigma (x)=\{1\}$ and thus $V(x)=1-\vert1-x\vert$. The value function $V$ is almost everywhere differentiable. The derivative is
\begin{align*}
    V'(x)\vert_{x\in(0,2)\setminus\{1\}}=\begin{cases}1 &\text{if}\quad 0<x<1\\-1 &\text{if}\quad 1<x<2 \end{cases}
\end{align*}

Two observations are immediate. First, the assumption that for every $z\in dom(f)$ it is 
$$\limsup_{y \rightarrow z}f(y)\geq f(z)$$ is necessary to write down Theorem \ref{Max_for_Closed}. Otherwise the Dini superdifferential of $h(x,y^*)$ might not be well-defined for $y^*$ a maximizer. The problem is that no superdifferentials are well-defined for isolated points (the points $h(x,1)$ in the example). However,\textit{ the assumption is not necessary to obtain differentiable value functions}.\\

Second, the derivative $V'(x)$ is directly computed using the chain rule by setting $y^*(x)=1$. In other words, it is possible to characterize the derivative of value functions with the chain rule even for problems where the objective function fails all continuity properties traditionally assumed in the literature.\footnote{For instance, the main envelope theorem of \citet{milgrom2002envelope} relies on the family
\begin{align*}
    \{h(x,y):x\in [0,2] \text{ and }y\in[0,2]\}
\end{align*}
consisting of functions that are absolutely continuous in $y$. The problem naturally fails this assumption.} Those observations demonstrate that properties of value functions such as continuity and differentiability are not necessarily "inherited" from objective functions. Such\textit{ properties are tighten to the space on which the value function} is defined.\\

Let us highlight further the role of Euclidean spaces being Dedekind-complete ordered fields. Consider again the two functions $f$ and $g$ defined above but now assume that $x\in X=\{0,1,2\}$. Theorem \ref{Metric_on_closed} cannot be used to characterize further the value function of this problem, nor does any other envelope theorem we are aware of. However, the value function is easily seen as
\begin{align*}
    V(x)=\begin{cases}1 &\text{ if } x=1\\ 0 &\text{ if } x\in \{0,2\}\end{cases}.
\end{align*}
This function is defined on the discrete space $X$ so its derivative cannot be defined. This difficulty is handled implicitly in positioning choice problems by insuring that the space of "parameters" $X$ is a (subset of a)  Dedekind-complete ordered field. Precisely, the problem stems from the fact that when $X=\{0,1,2\}$ it is $cor(X)=\emptyset$, while the set $cor(X)$ should be non-empty.\footnote{This example also shed lights on the role of finite-dimensionality. In the setting of this paper, the interior $int$ and the algebraic interior $cor$ of a set coincides. This is not true anymore in infinite-dimensional spaces where the two notions can differ dramatically. It might be possible to prove results similar to Theorem \ref{Metric_on_closed} in some infinite-dimensional spaces such as the ones defined in dynamic programming.}

\subsection*{Extension to other spaces: a Riemannian manifold example}
The previous two examples are pathological but informative. Let $X,Y$ be arbitrary sets, let $f:Y\rightarrow \mathbb{R}$ be a function and $g:X\times Y \rightarrow \mathbb{R}$ be an induced metric
on $X\times Y$. Set $h(x,y)$ as before and consider the optimisation problem $ \sup_{y\in Y} h(x,y)$
for $x\in X$ given. What are the minimal assumptions which must be imposed on the set $X$ and $Y$ in order to obtain statements similar to Theorem \ref{Metric_on_closed}?\\

The examples showed that one such minimal assumption is the algebraic interior of the set $X$ being non-empty. Moreover, the examples also showed that the characterization of maxima in Theorem \ref{Max_for_Closed} is not necessary to write down Theorem \ref{Metric_on_closed}. In other words, \textit{Theorem \ref{Metric_on_closed} does not rely in any fundamental ways on topological properties of the Euclidean space.} Thus, we believe that envelope theorems can be obtained for positioning choice problems defined on many other spaces. This belief is supported by the following example of a positioning choice problem defined on a Riemannian manifold.\\

Let $M$ be the ring (circle) obtained by defining the one-point closure of the $(0,2)$ interval. Let again $f(x)=x\cdot1_{x\geq 1}$, $g(x,y)= \vert y -x \vert$ and $h(x,y)= f(y)-g(x,y)$. For $x\in M$ given the optimisation problem is $\max_{y\in M} h(x,y)$. Notice that by the definition of the one-point closure it is $$\limsup_{z\in M, z \rightarrow 2}f(z)=2=\limsup_{z\in M, z \rightarrow 0}f(y).$$
The optimal choice correspondence is $$\sigma(x)=\begin{cases}\{0\} &\text{if}\quad 0\leq x<1 \\ [x,2] &\text{if}\quad 1\leq x<2 \end{cases}$$ 
and so the value function is $V(x)=1 + \vert 1-x\vert$. This function is almost everywhere differentiable with derivative
$$V'(x)\vert_{M\setminus\{0,1\}}=\begin{cases}-1 &\text{if}\quad 0<x<1\\1 &\text{if}\quad 1<x<2. \end{cases}$$

\section*{Observations for further research}
Envelope theorems are currently formulated to rely on the assumption of continuity of the objective function. In cases where discontinuities need to be analysed this assumption, fundamental to current envelope theorems, forces the development of different approaches. Accordingly we defined a class of optimisation problems called positioning choice problems for which the value function is almost everywhere differentiable, even when the objective function is discontinuous. While positioning choice problems makes the study of discontinuous optimisation programs analytically convenient,many assumptions made in the text can and should be relaxed. We conclude with a few observations to help formulate further research.\\

The fact that the Dini superdifferential is well-defined for (interior) maxima is easily seen to be more general than how it is formulated in the text. Further, the proof of Theorem 2 is almost identical to the proof of necessity of first-order conditions found in leading books on nonsmooth optimisation. We are thus positive that the tools developed in Section 2 can readily be applied to more general optimisation problems with discontinuous objective functions. The main pitfall lies in providing a suitable characterization of global maxima, as discontinuous optimisation programs are not concave and Hessian matrices are not well-defined. Proposition 8 suggests a promising two-step approach for unconstrained optimisation problems: first find a suitable "bound" to the discontinuous objective function to guarantees that maximizers are finite, then characterize the solution using appropriate techniques. Many important economic problems are in fact constrained optimisation problems, and important economic results obtain from comparative statics exercises. A classical example is the consumer's problem and the analytical definition of normal, inferior and Giffen goods through the derivative of demand functions. We believe that the methodology developed in this paper could be used to obtain statements close in spirit to Topkis's theorem, but where the supermodularity of the objective function is replaced by a 
suitable monotonicity assumption and where monotonicity properties of maximizers are obtained directly from Dini supergradients. Such statement should in turn allow revisiting classical problems while lifting continuity assumptions, for instance the consumer problem when goods are indivisible.

\nocite{*}
\bibliographystyle{aer}
\bibliography{bibliography.bib}

@article{Lauzier2021securitydesign,
  title={Ex-post moral hazard and manipulation-proof contracts},
  author={Lauzier, Jean-Gabriel},
  journal={Mimeo},
  year={2021}
}

@article{Lauzier2021insurancedesign,
  title={Insurance design and arson-type risks},
  author={Lauzier, Jean-Gabriel},
  journal={Mimeo},
  year={2021}
}

@article{milgrom1994monotone,
  title={Monotone comparative statics},
  author={Milgrom, Paul and Shannon, Chris},
  journal={Econometrica: Journal of the Econometric Society},
  pages={157--180},
  year={1994},
  publisher={JSTOR}
}

@article{quah2007comparative,
  title={The comparative statics of constrained optimization problems},
  author={Quah, John K-H},
  journal={Econometrica},
  volume={75},
  number={2},
  pages={401--431},
  year={2007},
  publisher={Wiley Online Library}
}

@article{sinander2019converse,
  title={The converse envelope theorem},
  author={Sinander, Ludvig},
  journal={arXiv preprint arXiv:1909.11219},
  year={2019}
}

@book{borwein2010convex,
  title={Convex analysis and nonlinear optimization: theory and examples},
  author={Borwein, Jonathan and Lewis, Adrian S},
  year={2010},
  publisher={Springer Science \& Business Media}
}

@article{milgrom2002envelope,
  title={Envelope theorems for arbitrary choice sets},
  author={Milgrom, Paul and Segal, Ilya},
  journal={Econometrica},
  volume={70},
  number={2},
  pages={583--601},
  year={2002},
  publisher={Wiley Online Library}
}

@article{morand2015nonsmooth,
  title={A nonsmooth approach to envelope theorems},
  author={Morand, Olivier and Reffett, Kevin and Tarafdar, Suchismita},
  journal={Journal of Mathematical Economics},
  volume={61},
  pages={157--165},
  year={2015},
  publisher={Elsevier}
}

@article{Morand2018,
author="Morand, Olivier
and Reffett, Kevin
and Tarafdar, Suchismita",
title="Generalized Envelope Theorems: Applications to Dynamic Programming",
journal="Journal of Optimization Theory and Applications",
year="2018",
month="Mar",
day="01",
volume="176",
number="3",
pages="650--687",
abstract="We show in this paper that the class of Lipschitz functions provides a suitable framework for the generalization of classical envelope theorems for a broad class of constrained programs relevant to economic models, in which nonconvexities play a key role, and where the primitives may not be continuously differentiable. We give sufficient conditions for the value function of a Lipschitz program to inherit the Lipschitz property and obtain bounds for its upper and lower directional Dini derivatives. With strengthened assumptions we derive sufficient conditions for the directional differentiability, Clarke regularity, and differentiability of the value function, thus obtaining a collection of generalized envelope theorems encompassing many existing results in the literature. Some of our findings are then applied to decision models with discrete choices, to dynamic programming with and without concavity, to the problem of existence and characterization of Markov equilibrium in dynamic economies with nonconvexities, and to show the existence of monotone controls in constrained lattice programming problems.",
issn="1573-2878",
doi="10.1007/s10957-018-1241-5",
url="https://doi.org/10.1007/s10957-018-1241-5"
}

@article{clarke1975generalized,
  title={Generalized gradients and applications},
  author={Clarke, Frank H},
  journal={Transactions of the American Mathematical Society},
  volume={205},
  pages={247--262},
  year={1975}
}

@article{clarke1981generalized,
  title={Generalized gradients of Lipschitz functionals},
  author={Clarke, Frank},
  journal={Advances in Mathematics},
  volume={40},
  number={1},
  pages={52--67},
  year={1981},
  publisher={Elsevier}
}

@book{danskintheory,
  title={The theory of max-min and its application to weapons allocation problems},
  author={Danskin, John M},
  volume={5},
   year={1967},
  publisher={Springer Science \& Business Media}
}

@book{clarke2013functional,
  title={Functional analysis, calculus of variations and optimal control},
  author={Clarke, Francis},
  volume={264},
  year={2013},
  publisher={Springer Science \& Business Media}
}

@book{mas1995microeconomic,
  title={Microeconomic theory},
  author={Mas-Colell, Andreu and Whinston, Michael Dennis and Green, Jerry R and others},
  volume={1},
  year={1995},
  publisher={Oxford university press New York}
}


\appendix
\section{Omitted definitions}
    The text always considers the measure spaces $(\mathbb{R}^n, \mathcal{B}(\mathbb{R}^n),\lambda_n)$ of finite-dimensions $1\leq n <\infty$, where $ \mathcal{B}(\mathbb{R}^n)$ is the Borel $\sigma$-algebra of $\mathbb{R}^n$ and $\lambda_n$ is the $n$-dimensional Lebesgue measure. This appendix always considers functions $f:\mathbb{R}^n\rightarrow\mathbb{R}$.\\

A property $P$ of $f$ is said to \textbf{hold almost everywhere} if there exists a set $A \in \mathcal{B}(\mathbb{R}^n)$ such that $\lambda_n(A)=0$ and $P$ is true on $\mathbb{R}^n\setminus A$. \\

A positive function $p: \mathbb{R}^n \rightarrow \mathbb{R}_+$ is said to be a \textbf{norm} if for every $\alpha \in \mathbb{R}$ and every $x,y\in \mathbb{R}^n$ it holds
   \begin{enumerate}
    \item $p(x)=0$ if and only if $x=\mathbf{0}$;
    \item $p(x+y)\leq p(x)+p(y)$;
    \item $p(\alpha x)=\vert \alpha \vert p(x)$.
    \end{enumerate}
A function $g: \mathbb{R}^n\times \mathbb{R}^n \rightarrow \mathbb{R}$ is \textbf{a metric on $\mathbb{R}^n$} if for every $x,y, z \in \mathbb{R}^n$ it satisfies the following conditions:
    \begin{itemize}
    \item \textit{non-negativity}: $g(x,y)\geq 0 $;
    \item \textit{identity of indiscernibles}: $g(x,y)=0 \iff x=y$;
    \item\textit{symmetry}: $g(x,y)=g(y,x)$ and
    \item \textit{subadditivity} or the \textit{triangle inequality}: $g(x,z)\leq g(x,y) +g(y,z)$.
    \end{itemize}
A metric $g$ is said to be \textbf{induced} by the norm $p$ if for every $x, y \in \mathbb{R}^n$ it is $$g(x,y)= p(y-x).$$
Throughout the text, the notation $\vert\vert \cdot \vert \vert$ denotes the usual Euclidean metric induced by the $L^2$-norm $$\left \vert\vert x \vert \right \vert_2=\sqrt{\sum_{i=1}^n x_i^2}.$$

The function $f$ is said to be \textbf{Lipschitz} if there exists a $0\leq K<\infty$ such that for every $x,y \in \mathbb{R}^n$ it holds
\begin{align*}
    \vert f(x)-f(y)\vert \leq K\vert\vert x-y \vert \vert.
\end{align*}

Let $V$ be a normed space. A point $\theta_0 \in \Theta \subset V$ is in the \textbf{algebraic interior} of $\Theta$, denoted by $cor(\Theta)$, if for every $v \in V$ there exists $\eta_\theta > 0$ such that $\theta_0+tv \in \Theta$ for all $0\leq t <\eta_\theta$. If $V= \mathbb{R}^n$ then $cor(\Theta)= int(\Theta)$, where $int$ denotes the interior of a set.\\

Let $V$ and $W$ be normed vector spaces with $\vert\vert \cdot \vert\vert_V$ and $\vert\vert \cdot \vert\vert_W$ the induced metric of $V$ and $W$ respectively and let $O\subset V$ be open. The function $f:O \rightarrow W$ is \textbf{\textit{Fr\'echet} differentiable} at $x\in O$ if there exists a bounded linear operator $A: V \rightarrow W$ such that 
\begin{align*}
    \lim_{\vert\vert h \vert\vert_V \rightarrow 0} \frac{\vert\vert f(x+h) - f(x) -Ah \vert\vert_W}{\vert\vert h \vert\vert_V} =0.
\end{align*}
If $V=\mathbb{R}^n$ and $W=\mathbb{R}$ then \textit{Fr\'echet} differentiability is the "usual" differentiability and $A$ is the gradient $\nabla f(x)$ of $f$ at $x$.\\

A proof of Rademacher's theorem requiring minimal knowledge of measure theory can be found in \citet{borwein2010convex} (Theorem 9.1.2). It is stated here for completeness.

\begin{theorem}[Rademacher's theorem] \label{Rademacher} Let $O\subset \mathbb{R}^n$ be an open subset of $\mathbb{R}^n$ and let $f:O \rightarrow \mathbb{R}$ be Lipschitz. Then the function $f$ is almost everywhere Fr\'echet differentiable.\\
\end{theorem}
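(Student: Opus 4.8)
The plan is to follow the classical route that reduces the $n$-dimensional statement to the one-dimensional fact that a Lipschitz function on the line is differentiable almost everywhere (a consequence of Lebesgue's theorem on the differentiability of monotone functions), and then to bootstrap the mere existence of directional derivatives into full Fr\'echet differentiability. I will freely use standard measure theory --- Fubini's theorem, dominated convergence, and the one-dimensional differentiation result --- and throughout let $L$ denote the Lipschitz constant of $f$.

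First I would fix a direction $v$ in the unit sphere $S^{n-1}$ and consider the directional derivative $D_v f(x)=\lim_{t\downarrow 0} t^{-1}(f(x+tv)-f(x))$. Restricting $f$ to a line $\{x_0+tv\}\cap O$, the single-variable function $t\mapsto f(x_0+tv)$ is $L$-Lipschitz and hence differentiable for almost every $t$; since the set where the defining limit fails is measurable (a limsup/liminf of measurable difference quotients), Fubini's theorem shows that $D_v f(x)$ exists for almost every $x\in O$. Performing this for the coordinate directions $e_1,\dots,e_n$ produces a candidate gradient $\nabla f(x)=(D_{e_1}f(x),\dots,D_{e_n}f(x))$, defined almost everywhere and bounded in norm by $L$.

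The crux is to show that $v\mapsto D_v f(x)$ is linear for almost every $x$, i.e. $D_v f(x)=\langle \nabla f(x),v\rangle$. I would establish this weakly: for any test function $\varphi\in C_c^\infty(O)$ the difference quotients are dominated by $L|\varphi|$, so dominated convergence together with a change of variables gives
\begin{align*}
\int_O D_v f(x)\,\varphi(x)\,dx = -\int_O f(x)\, D_v\varphi(x)\,dx = -\sum_{i=1}^n v_i\int_O f(x)\,\partial_i\varphi(x)\,dx = \int_O \langle\nabla f(x),v\rangle\,\varphi(x)\,dx,
\end{align*}
where the last equality reuses the same identity in each coordinate direction. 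As $\varphi$ is arbitrary, $D_v f=\langle\nabla f,\cdot\rangle$ almost everywhere for each fixed $v$. Choosing a countable dense set $\{v_k\}\subset S^{n-1}$ and intersecting the corresponding full-measure sets yields one full-measure set $E$ on which $D_{v_k}f(x)=\langle\nabla f(x),v_k\rangle$ holds simultaneously for all $k$.

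The final step, which I expect to be the main obstacle, is upgrading the existence and linearity of directional derivatives along a dense set of directions to genuine Fr\'echet differentiability at points of $E$. The key observation is that the remainder $Q(x,v,t)=t^{-1}(f(x+tv)-f(x))-\langle\nabla f(x),v\rangle$ is equi-Lipschitz in $v$: for unit vectors $v,v'$ one has $|Q(x,v,t)-Q(x,v',t)|\leq 2L\,\|v-v'\|$ uniformly in $t$, using the Lipschitz bound on $f$ and $\|\nabla f(x)\|\leq L$. Since $Q(x,v_k,t)\to 0$ as $t\downarrow 0$ for each $v_k$ when $x\in E$, covering the compact sphere $S^{n-1}$ by finitely many balls centred at net points and combining this pointwise convergence with the uniform Lipschitz modulus forces $Q(x,v,t)\to 0$ uniformly over $v\in S^{n-1}$. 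That uniformity is precisely Fr\'echet differentiability of $f$ at $x$, completing the proof. The delicate points to watch are the measurability needed before invoking Fubini and the compact-exhaustion argument required when $O$ is a proper open subset rather than all of $\mathbb{R}^n$.
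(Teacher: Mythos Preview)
Your argument is correct and is precisely the classical proof of Rademacher's theorem (one-dimensional differentiability via Lebesgue, a.e.\ linearity of the directional derivative via integration against test functions, and the compactness/equicontinuity bootstrap on $S^{n-1}$). However, the paper does not actually prove this theorem: it is merely \emph{stated for completeness} in Appendix~A, with the proof deferred to \citet{borwein2010convex}, Theorem~9.1.2. So there is nothing to compare against in the paper itself; what you have written is essentially the argument one finds in the cited reference, and it is sound. The only cosmetic point is that you define $D_vf$ as a one-sided limit $t\downarrow 0$, whereas Fr\'echet differentiability is two-sided; this is harmless since writing $h=tv$ with $t=\|h\|>0$ and $v=h/\|h\|\in S^{n-1}$ recasts the Fr\'echet condition exactly as your uniform statement $\sup_{v\in S^{n-1}}|Q(x,v,t)|\to 0$ as $t\downarrow 0$.
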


\section{Omitted proofs}
    \noindent\textbf{Proof of Lemma \ref{Max_for_Closed}:} Recall that $f$ is almost everywhere continuous except on a set of points where it satisfies  
\begin{align*}
   \mathrm{either }\, \limsup_{x\in C,\, x\rightarrow y}f(x)=f(y) \quad \mathrm{or} \quad   \liminf_{x\in C,\, x\rightarrow y}f(x)=f(y)
\end{align*}
and set the correspondence $\overline{f}: C \rightrightarrows 2^\mathbb{R}$ as 
$$\overline{f}(y)=\left\{z\in \mathbb{R}:\text{ either } \limsup_{x\in C,\, x\rightarrow y}f(x)=z \text{ or } \liminf_{x\in C,\, x\rightarrow y}f(x)=z \text{ or both}\right\}.$$
The set \begin{align*}
    D:=\{y\in C:\limsup_{x\in C,\, x\rightarrow y}f(x) \neq  \liminf_{x\in C,\, x\rightarrow y}f(x)\}
\end{align*}
is a set of measure zero and $\overline{f}=f$ on $C\setminus D$ so the two are almost everywhere equal.\\

Let $x$ be given. The problem $\sup_{y\in C} h(x,y)$ attains a finite supremum because $C$ is a closed box and it is assumed that
\begin{align*}
    \sup_{y \in C}\vert f(y)\vert < +\infty \quad \text{and} \quad \max_{(x,y) \in C\times C}\vert g(x,y)\vert < +\infty.
\end{align*} If this supremum is attained for some $y\in C$ then we are done.\\

Suppose it is not. Then $\sigma(x):=\arg\max_{y\in C}h(x,y) \subset D$. Pick a $y^* \in \sigma(x)$ and observe that since $g$ is continuous it holds
\begin{align*}
   \sup_{y\in C} h(x,y)= \limsup_{y\in C,y\rightarrow y^*} \left (f(y)-g(x,y)\right) = \limsup_{y\in C,y\rightarrow y^*}f(y) - \limsup_{y\in C,y\rightarrow y^*}g(x,y)= \limsup_{y\in C,y\rightarrow y^*}f(y)-g(x,y^*).  
\end{align*}
Let 
\begin{align*}
    \overline{f}^*(y)=\begin{cases}f(y)& \text{ if }y \notin D\\  \max\overline{f}(y)& \text{ if } y \in D.\end{cases}
\end{align*}
The function $\overline{f}^*$ is a selection of $\overline{f}$ for which $f=\overline{f}=\overline{f}^*$ almost everywhere and for $\overline{h}(x,y)=\overline{f}^*(y)-g(x,y)$ it holds
\begin{align*}
   \sup_{y\in C} \overline{h}(x,y)= \max_{y\in C} \overline{h}(x,y) =  \sup_{y\in C}, h(x,y), 
\end{align*} 
as desired. $\blacksquare$\\

\noindent\textbf{Proof of Theorem \ref{Metric_on_closed}:} It suffice to show that $V$ is continuous, as it follows immediately by observing that since $C$ is a closed box, $g$ is a metric and $f$ is bounded the value function is also Lipschitz and hence, by Rademacher's theorem, almost everywhere differentiable.\\

Suppose, by the way of contradiction, that $V$ is discontinuous at $x\in C$. Then there exists an $\alpha>0$ such that for every sequences $(x_n)_{n\in\mathbb{N}} \subset C$ converging to $x$, $x_n\rightarrow x$, it is $\vert V(x)- V(x_n) \vert \geq \alpha$.
By definition this means that
    $$\left \vert f(y^*(x))-g(x,y^*(x))-f(y^*(x_n))+g(x_n,y^*(x_n)) \right \vert \geq \alpha $$
    for every selection $y^*(x)\in \sigma(x)$ and every selection $y^*(x_n)\in \sigma(x_n)$. Since $g$ is continuous, the previous implies that for $n$ large it is $\sigma(x)\cap \sigma(x_n)=\emptyset$.
Without loss of generality suppose that $V(x)>V(x_n)$ and pick a $y^*(x)\in \sigma(x)$. Since $g$ is continuous it holds that
$$\vert V(x)-f(y^*(x))-g(x_n,y^*(x))\vert \rightarrow 0.$$
Thus, there exists a $\beta>0$ such that for every $x_n$ satisfying $\vert\vert x-x_n \vert \vert<\beta$ it is
$$f(y^*(x))-g(x_n,y^*(x))>f(y^*(x_n))+g(x_n,y^*(x_n)),$$
for every $y^*(x_n)\in \sigma(x_n)$, a contradiction. $\blacksquare$\\

\noindent\textbf{Proof of Fact \ref{integral_representation}:} Equality \eqref{IRVF} follows immediately from the second fundamental theorem of calculus by observing that since $V$ is Lipschitz it is almost everywhere differentiable (by Rademacher's theorem) and there exists a Riemann integrable function $v_0$ such that for every $a,b\in C$, $a<b$, it holds
\begin{align*}
    V(b)-V(a)= \int_a^b v_0(x)dx.
\end{align*}
Continuity of $v_0$ is also ensured by the Lipschitzianity of $V$. $\blacksquare$\\

\noindent\textbf{Proof of Proposition \ref{monotonocity}:} As $C$ is closed, $f$ is bounded and $g_i$ are continuous it holds that $V$ is Lipschitz and almost everywhere differentiability follows from Rademacher's theorem. The last statement is trivial if $f$ is non-decreasing. If $f$ is decreasing somewhere and if $g_i(y)= 0$ for every $y<0$ then the last statement follows by observing that there always exists an open ball $B(x)$ such that there exists a $y<<x$, $y\in B(x)$, for which
$$f(y) = f(y)-0 = f(y)- g(y,x) >f(x)-g(y,x) = f(x)$$
and $V$ is non-decreasing. $\blacksquare$\\

\noindent\textbf{Proof of Lemma \ref{Plane is a plane}:} Let $\Tilde{x}\in \mathbb{R}^n$ be given. Under the assumption that $g$ is  a positive and convex transformation of the Euclidean distance  $\vert \vert \cdot \vert \vert$  it holds that the optimisation problem $\sup_{y\in \mathbb{R}^n} h(\Tilde{x},y)$ attains a finite supremum. Let $y^*\in \sigma(\Tilde{x})$ and define $z=y^* - \Tilde{x}$. Since the Euclidean distance is translation invariant $g$ also is. Since $f$ is a plane it holds that for every $x\in \mathbb{R}^n$ we have $x+z\in \sigma (x)$. Hence, $V$ is a plane parallel to $f$. $\blacksquare$\\

\noindent\textbf{Proof of Proposition \ref{Positioning choice problems and interior solutions}:} Let $x$ be given and consider the sets
\begin{align*}
    Z(x):=\{y\in\mathbb{R}^n: f(x,y)-g(x,y) \geq 0\} \text{ and }  \Tilde{Z}(x):=\{y\in\mathbb{R}^n: \Tilde{f}(x,y)-g(x,y) \geq 0\}.
\end{align*}
 Since $f$ dominates $\Tilde{f}$ it holds that $\Tilde{Z}(x)\subset Z(x)$. Thus, the problem $\max_{y\in \mathbb{R}^n}\Tilde{h}(x,y)$ admits interior solutions and by definition it holds that $\Tilde{\sigma}(x)\subset\Tilde{Z}(x)\subset Z(x).$ This implies that the value function $V$ dominates the value function $\Tilde{V}$ and since the former is a plane, the latter is a locally Lipschitz map between two Euclidean spaces. By Rademacher's theorem, $V$ is almost everywhere \textit{Fr\'echet} differentiable, as desired. $\blacksquare$

\end{document}